\title{Deformation Quantization of a Certain Type of Open Systems}
\author{\textbf{Florian Becher}\thanks{Florian.Becher@physik.uni-freiburg.de},
  \addtocounter{footnote}{1}
  \textbf{Nikolai Neumaier}\thanks{Nikolai.Neumaier@physik.uni-freiburg.de},
  \textbf{Stefan Waldmann}\thanks{Stefan.Waldmann@physik.uni-freiburg.de},
  \\[0.1cm]
  Fakult{\"a}t f{\"u}r Mathematik und Physik\\
  Albert-Ludwigs-Universit{\"a}t Freiburg\\
  Physikalisches Institut\\
  Hermann Herder Stra{\ss}e 3\\
  D 79104 Freiburg\\
  Germany
  }
\date{September 2009}
\renewcommand{\mathbb}[1]{\mathbbm{#1}} 
\numberwithin{equation}{section}
\newcounter{comment}
\newtheorem{lemma}{Lemma}[section]
\newtheorem{proposition}[lemma]{Proposition}
\newtheorem{theorem}[lemma]{Theorem}
\newtheorem{corollary}[lemma]{Corollary}
\newtheorem{definition}[lemma]{Definition}
\newtheorem{example}[lemma]{Example}
\newtheorem{remark}[lemma]{Remark}
\newenvironment{proof}[1][{}]{\par\noindent\textsc{Proof{#1}: }}{\hspace*{\fill}$\blacksquare$\smallskip\noindent\par}
\newcommand{\cc}[1]      {\overline{{#1}}}              
\newcommand{\id}         {\operatorname{\mathsf{id}}}
\newcommand{\supp}       {\operatorname{\mathrm{supp}}}  
\newcommand{\Pol}        {\operatorname{\mathrm{Pol}}}   
\newcommand{\tr}         {\operatorname{\mathsf{tr}}}
\newcommand{\ring}[1]    {\mathsf{#1}}
\newcommand{\I}          {\mathrm{i}}
\newcommand{\D}          {\operatorname{\mathrm{d}}}
\newcommand{\pr}         {\mathrm{pr}}
\newcommand{\tensor}     {\mathbin{\widehat{\otimes}}\,}
\newcommand{\cinf}[1]    {C^\infty({#1})}	
\newcommand{\ckomp}[1]   {C^\infty_0({#1})}	
\newcommand{\gaminf}[1]  {\Gamma^\infty({#1})}   
\newcommand{\sze}[2]     {\mathcal{#1}_{#2}}      
\newcommand{\C}          {\mathbb{C}}            
\newcommand{\R}          {\mathbb{R}}            
\newcommand{\rk}[1]      {\left({#1}\right)}     
\newcommand{\gk}[1]      {\left\{{#1}\right\}}   
\newcommand{\ek}[1]      {\left[{#1}\right]}     
\newcommand{\Exp}        {\operatorname{\mathsf{Exp}}}   
\newcommand{\PresSquares}[1]  {\mathcal{#1}}   
\newcommand{\Sys}          {\mathrm{S}}
\newcommand{\Bad}          {\mathrm{B}}
\newcommand{\System}       {{\scriptscriptstyle\mathrm{S}}}
\newcommand{\Bath}         {{\scriptscriptstyle\mathrm{B}}}
\newcommand{\Interaction}  {{\scriptscriptstyle\mathrm{I}}}
\newcommand{\starSys}      {\mathbin{\star_{\System}}}
\newcommand{\starBad}      {\mathbin{\star_{\Bath}}}
\newcommand{\qs}         {{q_{\System}^{\phantom{2}}}}
\newcommand{\ps}         {{p_{\System}^{\phantom{2}}}}
\newcommand{\qb}         {{q_{\Bath}^{\phantom{2}}}}
\newcommand{\pb}         {{p_{\Bath}^{\phantom{2}}}}
\newcommand{\xs}         {{x_{\System}^{\phantom{2}}}}
\newcommand{\xb}         {{x_{\Bath}^{\phantom{2}}}}
\newcommand{\qen}        {q_1}
\newcommand{\pen}        {p_1}
\newcommand{\qzn}        {q_2}
\newcommand{\pzn}        {p_2}
\newcommand{\qin}[1]     {q_{#1}}
\newcommand{\pin}[1]     {p_{#1}}
\newcommand{\z}[1]       {{z_{#1}}}
\newcommand{\zb}[1]      {{\overline{z}_{#1}}}
\newcommand{\wick}       {{\scriptscriptstyle \mathrm{Wick}}}
\newcommand{\weyl}       {{\scriptscriptstyle \mathrm{Weyl}}}
\newcommand{\kms}        {{\scriptscriptstyle \mathrm{KMS}}}
\newcommand{\dx}[1]{\,\mathrm{d}{#1}}
\newcommand{\dbath}{\dx{\qb}\dx{\pb}}
\begin{document}

\maketitle

\begin{abstract}
    We give an approach to open quantum systems based on formal
    deformation quantization. It is shown that classical open systems
    of a certain type can be systematically quantized into quantum
    open systems preserving the complete positivity of the open time
    evolution. The usual example of linearly coupled harmonic
    oscillators is discussed.
\end{abstract}

\tableofcontents

\newpage

%
%

\section{Introduction}
\label{sec:Intro}

Attempts at the quantization of open systems, especially dissipative
systems, have been made for quite some time. Examples can, among many
others, be found in \cite{brittin:1950a, dekker:1975a, razavy:1977a}.
In particular, some approaches to the deformation quantization of
genuinely dissipative systems have been conducted, see
\cite{dito.turrubiates:2006a, dito.leandre:2007a}. So far, it seems
that no successful attempt has been made at a mathematically
consistent systematic quantization of open systems originating from
coupled systems.

We chose the framework of deformation quantization.  The central
object of deformation quantization \cite{bayen.et.al:1978a} is the
algebra of observables. States are regarded as a derived concept in
the sense of normalized positive linear functionals on the algebra of
observables in the classical as well as in the quantum case.  The star
products used to deform the classical algebra of observables in this
process are meant to be Hermitian star products. The existence of such
star products on the smooth functions of Poisson manifolds has been
proven by \cite{kontsevich:2003a}. For the special case of symplectic
manifolds the existence has been proven earlier by
\cite{dewilde.lecomte:1983b, fedosov:1986a,
  omori.maeda.yoshioka:1991a}, see also the textbook
\cite{waldmann:2007a} for additional references.

In the manner of speaking of \cite{breuer.petruccione:2003a}, we get
an open system (classical and quantum mechanical) by constructing a
microscopic model and non-selectively integrating the degrees of
freedom of the environment.

As a first step, we give a consistent and general definition of what a
classical and quantum open Hamiltonian system in the sense of
deformation quantization should be relying on the notion of completely
positive evolutions in both cases. As main result we prove that every
classical open Hamiltonian system can be deformation quantized
preserving complete positivity of the evolution map. A by-product of
independent interest is the result that for every Hermitian star
product on a Poisson manifold there is a completely positive map into
the undeformed algebra of formal series of smooth functions deforming
the identity map. Our general formalism is exemplified for two coupled
harmonic oscillators.

This article is organized in the following way: In
Section~\ref{sec:OpenDynamicalSystems} a notion of classical open
dynamical systems in general and the notion of a classical open
Hamiltonian system used for deformation quantization in particular are
defined.  In
Section~\ref{sec:DeformationQuantizationOpenHamiltonianSystems} we
will briefly recall the notions of a Hermitian star product and the
quantum time evolution with regard to a Hermitian star product. We
prove in Theorem~\ref{theorem:ExistPosDef} that for every Hermitian
star product one has a completely positive map deforming the identity
into the formal series of smooth functions with respect to the
undeformed product. This turns out to be the main tool to show
Theorem~\ref{theorem:SomeQuantizedOpenIsCP}: every classical open
Hamiltonian system can be deformation quantized.  In
Section~\ref{section:ExampleLinearlyCoupledHarmOszisI}, as an
illustration, we give the standard example of the total time evolution
of two one-dimensional linearly coupled harmonic oscillators in the
setting of deformation
quantization. Section~\ref{section:ExampleLinearlyCoupledHarmOszisII}
contains the open time evolutions of a coupled harmonic oscillator
with respect to states on the bath oscillator corresponding to
deformed initial values and to KMS states.

%
%

\section{Classical Open Dynamical Systems}
\label{sec:OpenDynamicalSystems}

There are many ways to specify the notion of open dynamical systems. A
fairly general approach is obtained as follows: We start with a
\emph{subsystem} whose pure states are described by a smooth manifold
$\Sys$ and a \emph{bath} which is described analogously by a smooth
manifold $\Bad$. The combined total system has the Cartesian product
$\Sys \times \Bad$ as space of pure states.

An \emph{open dynamical system} is now a time evolution of (pure)
states in $\Sys \times \Bad$ where we only look at the $\Sys$-part
``ignoring'' the $\Bad$-part. More precisely, this is obtained as
follows:

On the total system we specify an ordinary dynamical system, i.e.\ a
vector field $X \in \Gamma^\infty(T(\Sys \times \Bad))$ with
flow $\Psi_t: \Sys \times \Bad \longrightarrow \Sys \times \Bad$. For
simplicity, we may assume that the flow $\Psi_t$ is complete,
otherwise we have to restrict to certain neighbourhoods in $\Sys
\times \Bad$ and finite times in the usual way. With this assumption,
$\Psi_t$ is a smooth one-parameter group of diffeomorphisms of $\Sys
\times \Bad$ with
\begin{equation}
    \label{eq:DDtPsit}
    \frac{\D}{\D t} \Psi_t = X \circ \Psi_t
    \quad
    \textrm{for all}
    \; t \in \mathbb{R}.
\end{equation}
Next we consider the canonical projection maps
\begin{equation}
    \label{eq:ProjectionMaps}
    \Sys
    \stackrel{\pr_{\System}}{\longleftarrow}
    \Sys \times \Bad 
    \stackrel{\pr_{\Bath}}{\longrightarrow}
    \Bad,
\end{equation}
which allow to decompose the tangent bundle $T(\Sys \times \Bad)$ into
\begin{equation}
    \label{eq:TangentBundleOfMtimesB}
    T(\Sys \times \Bad)
    =
    \pr_{\System}^\# T\Sys \oplus \pr_{\Bath}^\# T\Bad,
\end{equation}
where $\pr_{\System}^\# T\Sys$ and $\pr_{\Bath}^\# T\Bad$ denote the
pull-backs of the tangent bundles of $\Sys$ and $\Bad$, respectively.

Clearly, the map $\pr_{\System}$ forgets the degrees of freedom of the
bath and thus corresponds precisely to the idea that we want to ignore
the $\Bad$-part. However, for the time evolution of $\Sys$ we still
have to specify an initial condition for the bath as well. For the
moment, we restrict ourselves to pure states and allow for mixed
states later on. Thus let $\xb \in \Bad$ be a point whence we have the
embedding
\begin{equation}
    \label{eq:iotab}
    \iota_{\xb}: \Sys \ni \xs \mapsto (\xs, \xb) \in \Sys \times \Bad,
\end{equation}
which is clearly a diffeomorphism onto its image such that
$\pr_{\System} \circ \iota_{\xb} = \id_{\System}$ and $\pr_{\Bath}
\circ \iota_{\xb} = \xb$ is the constant map.
\begin{definition}[Open time evolution, pure case]
    \label{definition:OpenTimeEvolutionPure}
    For any $\xb \in \Bad$ the open time evolution $\Phi^{\xb}_t: \Sys
    \longrightarrow \Sys$ of $\Sys$ with respect to the total time
    evolution $\Psi_t$ of $\Sys \times \Bad$ and the pure state $\xb$
    of the bath is given by
    \begin{equation}
        \label{eq:PureOTE}
        \Phi_t^{\xb} = \pr_{\System} \circ \Psi_t \circ \iota_{\xb}.
    \end{equation}
\end{definition}
Of course, we have to justify this definition and examine some
consequences as well as properties of $\Phi_t^{\xb}$. First of all,
the map
\begin{equation}
    \label{eq:PhibSmooth}
    \Phi_{\phantom{t}}^\xb: \mathbb{R} \times \Sys
    \ni (t, \xs) \mapsto \Phi^{\xb}_t(\xs) \in
    \Sys
\end{equation}
is clearly smooth. However, it does not have the usual properties of
an ordinary time evolution: For a fixed time $t$ the map
$\Phi^{\xb}_t$ needs not to be a diffeomorphism, not even for small
times. We only have the following ``evolution property'' which easily
follows from the one-parameter group property of $\Psi_t$:
\begin{proposition}
    \label{proposition:OTEProperty}
    For the open time evolution we have
    \begin{equation}
        \label{eq:PhibOTE}
        \Phi^{\xb}_0 = \id_{\System}
        \quad
        \textrm{and}
        \quad
        \Phi_s^{\pr_{\Bath}(\Psi_t(\xs ,\xb))} \circ \Phi^{\xb}_t (\xs)
        = \Phi_{s+t}^{\xb} (\xs)
    \end{equation}
    for all $\xs \in \Sys$, $\xb \in \Bad$, and $s, t \in \mathbb{R}$.
\end{proposition}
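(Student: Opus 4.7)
The proof is a direct unwinding of the definitions, so my plan is to just carry out the substitutions carefully. The only subtlety is to recognize that the bath-coordinate in the middle of the composition has been engineered precisely so that the embedding $\iota_{\pr_{\Bath}(\Psi_t(\xs,\xb))}$ applied to $\Phi^{\xb}_t(\xs)$ reconstructs $\Psi_t(\xs,\xb)$ exactly.

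First, the identity $\Phi^{\xb}_0 = \id_{\System}$ is immediate from $\Psi_0 = \id_{\Sys\times\Bad}$ together with $\pr_{\System}\circ\iota_{\xb}=\id_{\System}$, yielding $\Phi^{\xb}_0 = \pr_{\System}\circ\Psi_0\circ\iota_{\xb} = \pr_{\System}\circ\iota_{\xb} = \id_{\System}$.

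For the evolution property, write $y_{\Bath} := \pr_{\Bath}(\Psi_t(\xs,\xb))$. Because $(\pr_{\System},\pr_{\Bath})$ is a bijection onto $\Sys\times\Bad$, we have the key identity
\begin{equation*}
    \iota_{y_{\Bath}}\bigl(\Phi^{\xb}_t(\xs)\bigr)
    = \bigl(\Phi^{\xb}_t(\xs),\, y_{\Bath}\bigr)
    = \bigl(\pr_{\System}(\Psi_t(\xs,\xb)),\, \pr_{\Bath}(\Psi_t(\xs,\xb))\bigr)
    = \Psi_t(\xs,\xb).
\end{equation*}
Substituting this into the left-hand side of \eqref{eq:PhibOTE} and then invoking the one-parameter group property $\Psi_s\circ\Psi_t = \Psi_{s+t}$ gives
\begin{equation*}
    \Phi_s^{y_{\Bath}}\circ \Phi^{\xb}_t(\xs)
    = \pr_{\System}\circ \Psi_s \circ \iota_{y_{\Bath}}\bigl(\Phi^{\xb}_t(\xs)\bigr)
    = \pr_{\System}\circ \Psi_s \circ \Psi_t(\xs,\xb)
    = \pr_{\System}\circ\Psi_{s+t}(\xs,\xb),
\end{equation*}
which coincides with $\Phi_{s+t}^{\xb}(\xs) = \pr_{\System}\circ\Psi_{s+t}\circ\iota_{\xb}(\xs)$ by definition.

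There is no real obstacle here; the only thing worth flagging is the conceptual point that the bath initial condition has to be updated along the flow, which is exactly why $\xb$ on the outer evolution gets replaced by $\pr_{\Bath}(\Psi_t(\xs,\xb))$. This is also what forestalls $\Phi^{\xb}_t$ from forming a one-parameter semigroup in the usual sense and foreshadows why $\Phi^{\xb}_t$ need not be a diffeomorphism, as remarked just before the statement.
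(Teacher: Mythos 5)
Your proof is correct and is precisely the argument the paper has in mind: the statement is left unproved there with the remark that it ``easily follows from the one-parameter group property of $\Psi_t$'', and your key identity $\iota_{\pr_{\Bath}(\Psi_t(\xs,\xb))}\bigl(\Phi^{\xb}_t(\xs)\bigr)=\Psi_t(\xs,\xb)$ combined with $\Psi_s\circ\Psi_t=\Psi_{s+t}$ is exactly that intended unwinding of the definitions.
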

\begin{example}
    \label{example:OpenRotations}
    Let $\Sys = \mathbb{R} = \Bad$ and consider the time evolution
    \begin{equation}
        \label{eq:PsiRotation}
        \Psi_t \left(\xs ,\xb \right)
        = \left( \xs \cos(\nu t)  - \xb \sin(\nu t) ,
            \xs \sin(\nu t) + \xb \cos(\nu t)\right)
    \end{equation}
    on $\Sys \times \Bad$.
    \begin{enumerate}
    \item The simplest case is obtained for $\nu \in \mathbb{R}$ being
        a non-zero constant. Then the open time evolution for $\xb \in
        \Bad$ is given by
        \begin{equation}
            \label{eq:ExampleOTE}
            \Phi_t^{\xb} (\xs) = \xs \cos(\nu t) - \xb \sin(\nu t)
        \end{equation}
        which is a diffeomorphism for small $t$ but the constant map
        for $\nu t \in \frac{\pi}{2} + \pi \mathbb{Z}$.
    \item We can also consider the case where $\nu$ is a function on
        $\Sys \times \Bad$ depending only on the radius, e.g.\ $\nu
        (\xs, \xb) = x_{\System}^2 + x_{\Bath}^2$. Then $\Psi_t$ is
        still a one-parameter group of diffeomorphisms and the flow
        lines are still concentric circles around $(0, 0)$. However,
        the points in $\Sys \times \Bad$ spin faster the further away
        from $(0, 0)$ they are.  Now the open time evolution is
        \begin{equation}
            \label{eq:ExampleOTEII}
            \Phi_t^{\xb} (\xs)=
            \xs \cos((x_{\System}^2 + x_{\Bath}^2)t) - 
            \xb \sin((x_{\System}^2 + x_{\Bath}^2)t).
        \end{equation}
        E.g.\ for $\xb = 0$ this gives $\Phi_t^0(\xs) = \xs
        \cos(x_{\System}^2 t)$ which yields
        \begin{equation}
            \label{eq:PhitNullNeverDiffeo}
            \Phi_t^0\left(\sqrt{\frac{\pi}{2t}}\right) = 0
        \end{equation}
        for all $t > 0$. Since also $\Phi_t^0 (0) = 0$ for all $t$ we
        see that $\Phi_t^0$ cannot be a diffeomorphism, even for
        arbitrarily small time $t > 0$.
  \end{enumerate}
\end{example}

From the example we conclude that the open time evolution
$\Phi_t^{\xb}$ \emph{in general} is not a solution to a probably
time-dependent differential equation on $\Sys$ alone, i.e.\ in general
there is \emph{no} time-dependent vector field $X_t \in
\Gamma^\infty(T\Sys)$ with
\begin{equation}
    \label{eq:OTEnotTDTE}
    \frac{\D}{\D t} \Phi^{\xb}_t = X_t \circ \Phi_t^{\xb}.
\end{equation}
Nevertheless, this situation of a time-dependent vector field is a
particular case of an open time evolution as the next example shows:
\begin{example}
    \label{example:TDTEisOTE}
    Let $X_t \in \Gamma^\infty(T\Sys)$ be a smooth time-dependent
    vector field on $\Sys$ and let $\cc{X} \in \Gamma^\infty(T(\Sys
    \times \mathbb{R}))$ be the corresponding time-independent vector
    field
    \begin{equation}
        \label{eq:ccX}
        \cc{X} (\xs ,t) = 
        \left(X_t(\xs), \frac{\partial}{\partial t}\Big|_{t}\right),
    \end{equation}
    where we use the splitting~\eqref{eq:TangentBundleOfMtimesB} of
    $T(\Sys \times \mathbb{R})$ and the canonical constant vector
    field on the ``bath'' $\Bad = \mathbb{R}$. For simplicity, we
    assume that $\cc{X}$ has a complete flow $\Psi_t$. Then the open
    time evolution for initial condition $\xb = 0$ of the bath is
    \begin{equation}
        \label{eq:PhiNulltTDVectorField}
        \Phi^0_t(\xs) = \pr_{\System} \left( \Psi_t(\xs , 0)\right).
    \end{equation}
    But this is precisely the time evolution of the time-dependent
    vector field $X_t$, i.e.\ we have
    \begin{equation}
        \label{eq:FlowOfXt}
        \frac{\D}{\D t} \Phi^0_t = X_t \circ \Phi^0_t,
    \end{equation}
    as an easy and well-known computation shows. Thus the ordinary
    time evolution of a time-dependent vector field can be viewed as a
    particular case of an open time evolution in the sense of
    Definition~\ref{definition:OpenTimeEvolutionPure}.
\end{example}

In view of the yet to be found quantization of open dynamical systems
we consider now the effect of an open time evolution on the functions
$C^\infty(\Sys)$ as these will play the role of the observables
later. The following statement is obvious:
\begin{proposition}
    \label{proposition:OpenTimeEvolutionPureIsHom}
    Let $\xb \in \Bad$. Then 
    $(\Phi^{\xb}_t)^*: C^\infty(\Sys) \longrightarrow C^\infty(\Sys)$
    is a $^*$-homomorphism for every $t \in \mathbb{R}$ and we have
    \begin{equation}
        \label{eq:PullBackWithOTE}
        (\Phi^{\xb}_t)^* 
        = (\id \tensor \delta_{\xb})
        \circ \Psi_t^* \circ \pr_{\System}^*.
    \end{equation}
\end{proposition}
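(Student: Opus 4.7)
The plan is to unpack the definition $\Phi_t^{\xb} = \pr_{\System} \circ \Psi_t \circ \iota_{\xb}$ and exploit the contravariant functoriality of the pullback together with the identification $\iota_{\xb}^* = \id \tensor \delta_{\xb}$.

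First I would note that $\Phi_t^{\xb}: \Sys \longrightarrow \Sys$ is smooth (as a composition of smooth maps), so the pullback $(\Phi_t^{\xb})^*$ is automatically an algebra homomorphism with respect to the pointwise product and commutes with complex conjugation. This immediately yields the $^*$-homomorphism property, which is the easy half of the statement.

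For the formula, contravariant functoriality of the pullback gives
\begin{equation*}
    (\Phi_t^{\xb})^*
    = (\pr_{\System} \circ \Psi_t \circ \iota_{\xb})^*
    = \iota_{\xb}^* \circ \Psi_t^* \circ \pr_{\System}^*.
\end{equation*}
It then remains to identify $\iota_{\xb}^*: C^\infty(\Sys \times \Bad) \longrightarrow C^\infty(\Sys)$ with $\id \tensor \delta_{\xb}$ under the standard (completed) tensor product identification $C^\infty(\Sys \times \Bad) \cong C^\infty(\Sys) \tensor C^\infty(\Bad)$. Indeed, for any $F \in C^\infty(\Sys \times \Bad)$ and any $\xs \in \Sys$ one has $(\iota_{\xb}^* F)(\xs) = F(\xs, \xb)$, while on an elementary tensor $f \otimes g \in C^\infty(\Sys) \tensor C^\infty(\Bad)$ the map $\id \tensor \delta_{\xb}$ acts by $f \otimes g \mapsto g(\xb) \, f$, which agrees with evaluation at $\xb$ in the second argument. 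By the density/universal property of the topological tensor product, both maps coincide on all of $C^\infty(\Sys \times \Bad)$.

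There is essentially no obstacle here beyond carefully pinning down the tensor product convention for $C^\infty(\Sys \times \Bad)$; once this identification is in place, the formula is immediate from the definition of $\Phi_t^{\xb}$. Everything is a routine verification modulo choosing the appropriate completed tensor product (for example, the projective or $\epsilon$-tensor product of nuclear Fréchet spaces) so that the isomorphism $C^\infty(\Sys) \tensor C^\infty(\Bad) \cong C^\infty(\Sys \times \Bad)$ is available.
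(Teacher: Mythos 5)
Your proposal is correct and follows exactly the route the paper intends: the paper states the proposition without proof, calling it a trivial reformulation of the definition, and its surrounding discussion makes clear that the content is precisely the functoriality $(\pr_{\System} \circ \Psi_t \circ \iota_{\xb})^* = \iota_{\xb}^* \circ \Psi_t^* \circ \pr_{\System}^*$ together with the identification $\iota_{\xb}^* = \id \tensor \delta_{\xb}$ that you spell out. Nothing is missing.
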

Here $\delta_{\xb}: C^\infty(\Sys) \longrightarrow \mathbb{C}$ denotes
the $\delta$-functional at $\xb$, i.e.\ the evaluation of a function
at the point $\xb$. Moreover, $\id \tensor \delta_{\xb}$ is the
induced map
\begin{equation}
    \label{eq:idtensoromega}
    \id \tensor \delta_{\xb}:
    C^\infty(\Sys) \tensor C^\infty(\Bad) 
    = C^\infty(\Sys \times \Bad) \longrightarrow C^\infty(\Sys),
\end{equation}
where $\tensor$ denotes the completed projective tensor product.  Note
that the involved Fr\'{e}chet spaces are nuclear anyway.

Though Proposition~\ref{proposition:OpenTimeEvolutionPureIsHom} is a
trivial reformulation of the definition of $\Phi_t^{\xb}$ it gives a
new point of view: to this end, recall that a linear functional
$\omega_0: C^\infty(M) \longrightarrow \mathbb{C}$ is called
\emph{positive} if $\omega_0(\cc{f}f) \ge 0$ for all functions $f \in
C^\infty(M)$.  Similarly, we can define a positive functional on
matrix-valued functions $M_n(C^\infty(M))$. Having the notion of
positive linear functionals we can define positive algebra elements by
setting that $f \in C^\infty(M)$ is \emph{positive} if $\omega_0(f)
\ge 0$ for all positive linear functionals $\omega_0$. Then it is a
true but slightly non-trivial fact that $f$ is positive iff $f(p) \ge
0$ for all points $p \in M$. The same holds for matrix-valued
functions: a function $F \in M_n(C^\infty(M))$ is positive iff $F(p)$
is a positive semi-definite matrix for all $p \in M$. Note that in our
approach, this is \emph{not} a definition but a consequence of the
more algebraic definition.  Finally, a linear map $\phi: C^\infty(M)
\longrightarrow C^\infty(N)$ is called \emph{positive} if it maps
positive functions to positive functions. More important is the notion
of a completely positive map: $\phi$ is called \emph{completely
  positive} if all the canonical extensions $\phi: M_n(C^\infty(M))
\longrightarrow M_n(C^\infty(N))$ are positive maps for $n \in
\mathbb{N}$.  Clearly, this is the standard definition valid for every
$^*$-algebra over the complex numbers $\mathbb{C}$, see e.g.\
\cite{schmuedgen:1990a} for a detailed exposition and
\cite[App.~B]{bursztyn.waldmann:2001a} for a discussion of the case of
smooth functions.

Now we come back to our particular situation: while $\Phi_t^*$ and
$\pr_{\System}^*$ are canonically given $^*$-homomorphisms of the
$^*$-algebras of smooth functions and hence completely positive maps,
the map $\id \tensor \delta_{\xb}$ can also be interpreted as a
positive (and in fact completely positive) map which coincides with a
$^*$-homomorphism $\iota_{\xb}^*$ ``by accident''. In particular, we
can replace the positive functional $\delta_{\xb}$ by any, not
necessarily pure state $\omega_0$ of $C^\infty(\Bad)$, that is, a
positive linear normalized functional $\omega_0: C^\infty(\Bad)
\longrightarrow \C$. This yields the following, more general
definition of an open time evolution:
\begin{definition}[Open time evolution, general case]
    \label{definition:OpenTimeEvolutionGeneral}
    For any state $\omega_0: C^\infty(\Bad) \longrightarrow
    \mathbb{C}$ of the bath, the open time evolution of $\Sys$ with
    respect to the total time evolution $\Psi_t$ and the state
    $\omega_0$ is given by
    \begin{equation}
        \label{eq:OpenTimeEvolutionGeneral}
        (\Phi_t^{\omega_0})^*
        = (\id \tensor \omega_0) \circ \Psi_t^* \circ \pr_{\System}^*.
    \end{equation}
\end{definition}
\begin{remark}
    \label{remark:PosFunctBorelMeasure}
    Any positive functional $\omega_0: C^\infty(\Bad) \longrightarrow
    \mathbb{C}$ is actually a positive Borel measure with compact
    support, see e.g.\ \cite[App.~B]{bursztyn.waldmann:2001a}: for
    continuous functions this is the famous Riesz Representation
    Theorem, see e.g.\ \cite[Thm.~2.14]{rudin:1987a}, which can be
    shown to extend to the smooth setting.  Therefore, any state
    $\omega_0: C^\infty(\Bad) \longrightarrow \mathbb{C}$ is
    automatically continuous with respect to the smooth topology. Thus
    the map $\id \otimes \omega_0$ extends to the completed tensor
    product making the above expression in
    \eqref{eq:OpenTimeEvolutionGeneral} well-defined.
\end{remark}
The notation $(\Phi_t^{\omega_0})^*$ is of course only symbolic as
there is clearly \emph{no longer an underlying map of manifolds}. With
this definition we shifted the focus to the observable algebra rather
than the underlying geometry.
\begin{proposition}
    \label{proposition:PhiOmegaCP}
    For any state $\omega_0$ of the bath, the open time evolution
    $(\Phi_t^{\omega_0})^*: C^\infty(\Sys) \longrightarrow
    C^\infty(\Sys)$ is a completely positive map.
\end{proposition}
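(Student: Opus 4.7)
The plan is to exhibit $(\Phi_t^{\omega_0})^*$ as a composition of three completely positive maps and invoke the fact that compositions of completely positive maps are again completely positive. By definition
\[
    (\Phi_t^{\omega_0})^* = (\id \tensor \omega_0) \circ \Psi_t^* \circ \pr_{\System}^*,
\]
so I would treat each of the three factors in turn.

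The outer two factors are easy. Since $\pr_{\System}: \Sys \times \Bad \to \Sys$ is smooth and $\Psi_t: \Sys \times \Bad \to \Sys \times \Bad$ is a diffeomorphism, their pullbacks $\pr_{\System}^*$ and $\Psi_t^*$ are $^*$-homomorphisms of the respective $^*$-algebras of smooth functions. Every unital $^*$-homomorphism between $^*$-algebras is completely positive, since applied componentwise it sends elements of the form $\sum_k A_k^* A_k \in M_n$ to elements of the same form. Hence $\pr_{\System}^*$ and $\Psi_t^*$ are completely positive, and the remaining content of the proposition is that the slice map $\id \tensor \omega_0$ is completely positive.

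The main step is therefore to prove complete positivity of $\id \tensor \omega_0: C^\infty(\Sys \times \Bad) \to C^\infty(\Sys)$. For this I would use Remark~\ref{remark:PosFunctBorelMeasure}, which identifies $\omega_0$ with integration against a positive Borel measure $\mu$ of compact support on $\Bad$; continuity with respect to the smooth topology ensures that $\id \otimes \omega_0$ extends to the completed projective tensor product, acting on $F \in C^\infty(\Sys \times \Bad)$ by
\[
    (\id \tensor \omega_0)(F)(\xs) = \int_{\Bad} F(\xs, \xb) \, \D\mu(\xb).
\]
Applied componentwise to a positive matrix-valued function $F = (f_{ij}) \in M_n(C^\infty(\Sys \times \Bad))$ -- that is, one with $F(\xs, \xb)$ positive semi-definite at every point, by the pointwise characterisation of positivity recalled in the text -- the resulting matrix $\bigl(\int f_{ij}(\xs, \xb) \,\D\mu(\xb)\bigr)_{ij}$ is positive semi-definite for every $\xs \in \Sys$, since for any $v \in \C^n$ the scalar $v^* \bigl((\id \tensor \omega_0)(F)(\xs)\bigr) v = \int v^* F(\xs, \xb) v \,\D\mu(\xb)$ is a $\mu$-integral of non-negative numbers. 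Thus $(\id \tensor \omega_0)(F)$ is positive in $M_n(C^\infty(\Sys))$, which gives the complete positivity of $\id \tensor \omega_0$ and concludes the proof. The only mildly subtle point is the interplay with the projective tensor product completion, but this is already handled by the remark, so no genuine obstacle arises.
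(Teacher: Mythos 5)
Your proof is correct, and its skeleton is the same as the paper's: factor $(\Phi_t^{\omega_0})^*$ as $(\id \tensor \omega_0) \circ \Psi_t^* \circ \pr_{\System}^*$, dispose of the two pullbacks as $^*$-homomorphisms, and reduce everything to the complete positivity of the slice map $\id \tensor \omega_0$, checked pointwise in $\xs \in \Sys$. Where you diverge is in how that last positivity is verified. The paper stays entirely algebraic: it observes that $\delta_{\xs} \circ (\id \tensor \omega_0) = \omega_0 \circ \iota_{\xs}^*$ is a positive linear functional (a positive functional composed with a $^*$-homomorphism), and positive functionals are automatically completely positive, so $\delta_{\xs}\bigl((\id\tensor\omega_0)(F^*F)\bigr)$ is positive semi-definite for every $\xs$. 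You instead invoke Remark~\ref{remark:PosFunctBorelMeasure} to realize $\omega_0$ as a compactly supported positive Borel measure and integrate $v^*Fv \ge 0$ directly. Both are valid; yours is perhaps more concrete and makes the phrase ``integrate out the bath'' literal (cf.\ Remark~\ref{remark:IntegrateBout}), but it leans on the Riesz-type representation theorem, which the paper only needs for continuity and well-definedness on the completed tensor product, not for positivity. The paper's route is the one that survives the passage to the deformed setting in Theorem~\ref{theorem:SomeQuantizedOpenIsCP}, where no measure-theoretic picture of a positive functional is available; your argument, by contrast, is genuinely tied to the classical, commutative case. One small point of care: your claim that $^*$-homomorphisms send ``elements of the form $\sum_k A_k^*A_k$ to elements of the same form'' is the right idea, but with the paper's definition of positivity (via positive functionals) the clean statement is that $\omega \circ \phi^{(n)}$ is again a positive functional whenever $\phi$ is a $^*$-homomorphism; for $C^\infty$ algebras the two formulations agree by the pointwise characterization the paper recalls.
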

\begin{proof}
    Since $\Psi_t^*$ and $\pr_{\System}^*$ are $^*$-homomorphisms we
    only have to show that $\id \tensor \omega_0$ is a completely
    positive map from $C^\infty(\Sys \times \Bad)$ to
    $C^\infty(\Sys)$.  Thus let $F \in M_n(C^\infty(\Sys \times
    \Bad))$ be given and let $\xs \in \Sys$.  Then we have the
    embedding $\iota_{\xs}: \Bad \longrightarrow \Sys \times \Bad$
    whence
    \begin{equation}
        \label{eq:deltamomega}
        \delta_{\xs} \circ (\id \tensor \omega_0) 
        = \delta_{\xs} \tensor \omega_0 
        = \omega_0 \circ (\delta_{\xs} \tensor \id)
        = \omega_0 \circ \iota^*_{\xs}.
    \end{equation}
    Since $\iota_{\xs}^*$ is a $^*$-homomorphism, the composition
    $\omega_0 \circ \iota^*_{\xs}$ is still a positive functional and
    hence a completely positive map. Thus, applied to $F^*F$, we get a
    positive semi-definite matrix $\omega_0 \circ \iota^*_{\xs} (F^*F)
    = \delta_{\xs} \circ (\id \tensor \omega_0) (F^*F)$.  Since this
    is true for every point $\xs \in \Sys$, we have a positive element
    $(\id \tensor \omega_0)(F^*F) \in M_n(C^\infty(\Sys))$ proving the
    claim.
\end{proof}
\begin{remark}
    \label{remark:IntegrateBout}
    Since any positive functional $\omega_0: C^\infty(\Bad)
    \longrightarrow \mathbb{C}$ is actually a positive Borel measure
    with compact support, the map $\id \tensor \omega_0$ indeed means
    to integrate over the bath degrees of freedom with respect to a
    measure specified by $\omega_0$.
\end{remark}
\begin{remark}
    \label{remark:OTENotHomomorphism}
    Note also that in the case of a $\delta$-functional instead of an
    arbitrary state $\omega_0$, the open time evolution actually is a
    $^*$-homomorphism, in contrast to the case of arbitrary states.
    However, in general, $(\Phi_t^{\omega_0})^*$ is just a completely
    positive map without any further nice algebraic features.
\end{remark}

While up to now we have considered arbitrary dynamical systems, we
shall pass to more specific ones: we assume to have a Hamiltonian
dynamics on the total space of the system and the bath. In more
detail, we choose the rather general setting of Poisson geometry to
formulate Hamiltonian dynamics. This framework contains in particular
any symplectic phase space such as coadjoint orbits, cotangent bundles
or Kähler manifolds. However, also the dual of a Lie algebra is a
(linear) Poisson manifold which is important when dealing with
symmetries, see e.g.\ \cite[Chap.~3 \& Chap.~4]{waldmann:2007a} for an
introduction.

Thus, let the state space of the system
$(\Sys,\pi_{\System}^{\phantom{\sharp}})$ and the one of the bath
$(\Bad,\pi_{\Bath}^{\phantom{\sharp}})$ be in addition Poisson
manifolds with Poisson structures $\pi_{\System}^{\phantom{\sharp}}$
and $\pi_{\Bath}^{\phantom{\sharp}}$. On the total system $\Sys \times
\Bad$ we choose the product Poisson structure
\begin{equation}
    \label{eq:TotalPoissonStructure}
    \pi
    = \pr^\sharp_{\System} \pi_{\System}^{\phantom{\sharp}}
    + \pr^\sharp_{\Bath} \pi_{\Bath}^{\phantom{\sharp}}.
\end{equation}
This means that for functions $f_{\System}, g_{\System} \in
C^\infty(\Sys)$ and $f_{\Bath}, g_{\Bath} \in C^\infty(\Bad)$ the
factorizing functions $f = f_{\System} \otimes f_{\Bath}$ and $g =
g_{\System} \otimes g_{\Bath}$ have the Poisson bracket
\begin{equation}
    \label{eq:PoissonBracketOnFactorizingFunctions}
    \{f, g\}
    =
    \{f_{\System}, g_{\System}\}_{\System} \otimes f_{\Bath} g_{\Bath}
    +
    f_{\System} g_{\System} \otimes \{f_{\Bath}, g_{\Bath}\}_{\Bath}.
\end{equation}

The dynamics of the total system is given by the Hamiltonian vector
field $X_H \in \gaminf{T(\Sys\times \Bad)}$ with respect to the
\emph{total Hamiltonian} $H \in \cinf{\Sys\times \Bad}$. Recall that
the Hamiltonian vector field is defined by $X_H = \{\cdot, H\}$. In
typical situations, the total Hamiltonian contains three parts: we
have the Hamiltonian $H_{\System}^{\phantom{*}} \in C^\infty(\Sys)$ of
the system alone, the Hamiltonian $H_{\Bath}^{\phantom{*}} \in
C^\infty(\Bad)$ of the bath alone, and an interaction Hamiltonian
$H_{\Interaction}^{\phantom{*}} \in C^\infty(\Sys \times \Bad)$ such
that the total Hamiltonian is
\begin{equation}
    \label{eq:TotalHamiltonian}
    H = 
    \pr_{\System}^* H_{\System}^{\phantom{*}}
    + \pr_{\Bath}^* H_{\Bath}^{\phantom{*}}
    + H_{\Interaction}^{\phantom{*}}.
\end{equation}
Then the \emph{total Hamiltonian time evolution} is the flow $\Phi_t:
\Sys \times \Bad \longrightarrow \Sys \times \Bad$ of $X_H$ which
we assume to be complete for simplicity and analogously to
Definition~\ref{definition:OpenTimeEvolutionGeneral} the open
Hamiltonian time evolution with respect to a given state of the bath
is defined as follows:
\begin{definition}[Classical open Hamiltonian time evolution]
    \label{definition:ClassOpenTimeEvoMicroscopSys}
    The classical open Hamiltonian time evolution of the system $\Sys$
    with respect to a total Hamiltonian time evolution $\Phi_t$ of
    $\Sys \times \Bad$ and a given state $\omega_0$ of the bath is
    given as the open time evolution
    \begin{equation}
        \label{eq:OpenHamiltonianTimeEvolution}
        (\Phi_t^{\omega_0})^*:
        C^\infty(\Sys) \longrightarrow C^\infty(\Sys)
    \end{equation}
    according to Definition~\ref{definition:OpenTimeEvolutionGeneral}.
\end{definition}
\begin{remark}
    \label{remark:OnlyCP}
    Again, unless we have special circumstances, the open Hamiltonian
    time evolution is just a completely positive map without any
    further algebraic features. In particular, there is \emph{no
      reason} that $(\Phi_t^{\omega_0})^*$ should preserve Poisson
    brackets, even for $\omega_0 = \delta_{\xb}$ being a pure state.
\end{remark}

%
%

\section{Deformation Quantization of Open Hamiltonian Systems}
\label{sec:DeformationQuantizationOpenHamiltonianSystems}

In this section we will establish the deformation quantized version of
the open Hamiltonian time evolution. To this end, we recall that a
\emph{formal star product} on a Poisson manifold $(M, \pi)$ is an
associative $\mathbb{C}[[\hbar]]$-bilinear multiplication
\begin{equation}
    \label{eq:StarProduct}
    f \star g = \sum_{r=0}^\infty \hbar^r C_r(f, g)
\end{equation}
for $f, g \in C^\infty(M)[[\hbar]]$ such that $C_0 (f, g) = fg$ is the
undeformed commutative product, $C_1(f, g) - C_1(g, f) = \I \{f, g\}$
with the Poisson bracket $\{\cdot,\cdot\}$, $1 \star f = f = f \star
1$ for the constant function $1$, and all $C_r$ are bidifferential
operators \cite{bayen.et.al:1978a}, see also \cite{waldmann:2007a} for
a pedagogical introduction. The reason that we chose formal star
products where a priori no convergence in $\hbar$ is controlled, is
that for this situation we have the powerful existence and
classification theorems of deformation quantization at hand.
Physically, of course, one would like to have convergence or at least
some asymptotic statements. In many examples this is possible but we
shall not enter this rather technical issue here any further.

In the sequel, the case where the star product $\star$ is
\emph{Hermitian} will be important, i.e.\
\begin{equation}
    \label{eq:HermitianStarProduct}
    \cc{f \star g} = \cc{g} \star \cc{f}
\end{equation}
for all $f, g \in C^\infty(M)[[\hbar]]$ where $\cc{\hbar} = \hbar$ is
treated as a \emph{real} quantity. This $^*$-involution will be
necessary to have the honest interpretation of the algebra
$\rk{C^\infty(M)[[\hbar]], \star}$ as observable algebra of the
quantum system corresponding to the classical system.

Having the observable algebra, it is natural to define the states in
the same way as classically: we use positive linear functionals. Now
however, we have to specify first what a \emph{positive formal series}
should be. Here we can rely on the following definition. A non-zero
real formal power series $a = \sum_{r=r_0}^\infty \hbar^r a_r \in
\mathbb{R}[[\hbar]]$ is called \emph{positive} if its lowest non-zero
component is positive, $a_{r_0} > 0$. In this case we write
$a\geq0$. This is a good definition for many reasons: if we view
formal series as arising from asymptotic
expansions then this is what remains from a positive function. More
algebraically, $\mathbb{R}[[\hbar]]$ becomes an \emph{ordered ring} by
this definition, hence we can rely on the rich and well-developed
theory of $^*$-algebras over ordered rings, see e.g.\
\cite{bursztyn.waldmann:2005b, waldmann:2005b} for an overview and
\cite[Chap.~7]{waldmann:2007a} for an introduction and further
references.

For star product algebras we can proceed analogously to the classical case and define a
$\mathbb{C}[[\hbar]]$-linear functional $\omega: C^\infty(M)[[\hbar]]
\longrightarrow \mathbb{C}[[\hbar]]$ to be \emph{positive} if
\begin{equation}
    \label{eq:omegaPositive}
    \omega(\cc{f} \star f) \ge 0
\end{equation}
for all $f \in C^\infty(M)[[\hbar]]$. It can be shown that it suffices
to check \eqref{eq:omegaPositive} for $f \in C^\infty(M)$ without
higher orders of $\hbar$. Analogously, we define positive linear
functionals for matrix-valued functions $F \in
M_n(C^\infty(M)[[\hbar]])$ where the star product is extended to
matrices in the usual way. Having positive functionals we define $f
\in C^\infty(M)[[\hbar]]$ or $F \in M_n(C^\infty(M)[[\hbar]])$ to be a
\emph{positive algebra element} if
\begin{equation}
    \label{eq:omegafOmegaF}
    \omega(f) \ge 0
    \quad
    \textrm{and}
    \quad
    \Omega(F) \ge 0
\end{equation}
for all positive functionals $\omega$ and $\Omega$, respectively.
Finally, a $\mathbb{C}[[\hbar]]$-linear map $\phi:
C^\infty(M)[[\hbar]] \longrightarrow C^\infty(N)[[\hbar]]$ between two
star product algebras on possibly different underlying manifolds is
called \emph{positive} if $\phi$ maps positive elements to positive
elements. Equivalently, $\phi$ is called positive if $\omega \circ
\phi$ is a positive functional on $C^\infty(M)[[\hbar]]$ for all
positive functionals $\omega$ on $C^\infty(N)[[\hbar]]$. The map
$\phi$ is called \emph{completely positive} if this is also true for
arbitrary matrix-valued functions, i.e.\ if $\phi^{(n)}:
M_n(C^\infty(M)[[\hbar]]) \longrightarrow M_n(C^\infty(N)[[\hbar]])$
is positive for all $n \in \mathbb{N}$. Note that even though these
definitions are in complete analogy to the classical situation, it is
nevertheless crucial to have a good notion of positive formal power
series in $\mathbb{R}[[\hbar]]$.
\begin{remark}
    \label{remark:StarAlgebraOverOrderedRings}
    It is clear that the above concepts generalize immediately to
    $^*$-algebras $\mathsf{A}$ over a ring $\ring{C} = \ring{R}(\I)$
    where $\ring{R}$ is an ordered ring and $\I$ is a square root of
    $-1$. Even though many of the following considerations generalize
    to this algebraic framework as well, we shall focus on the more
    particular situation of star products.
\end{remark}
\begin{remark}
    \label{remark:CompletelyPositiveMaps}
    In the following, completely positive maps will play a crucial
    role. It is easy to see that positive functionals are in fact
    completely positive maps. Also $^*$-homomorphisms are completely
    positive. Moreover, the composition of completely positive maps as
    well as convex combinations of completely positive maps are again
    completely positive. Finally, less evident but nevertheless true
    is the fact that the algebraic tensor product of completely positive maps is
    again completely positive. In general, this last statement is
    wrong for positive maps.
\end{remark}

To describe the positive $\mathbb{C}[[\hbar]]$-linear functionals of
$\rk{C^\infty(M)[[\hbar]],\star}$ one first notes that $\omega$ is
necessarily of the form
\begin{equation}
    \label{eq:omegaFormalSeries}
    \omega = \sum_{r=0}^\infty \hbar^r \omega_r
    \quad
    \textrm{with linear maps}
    \quad
    \omega_r: C^\infty(M) \longrightarrow \mathbb{C}.
\end{equation}
Then the positivity $\omega(\cc{f} \star f) \ge 0$ in the sense of
formal power series immediately implies that $\omega_0(\cc{f}f) \ge 0$
classically, i.e.\ $\omega_0$ is a positive $\mathbb{C}$-linear
functional. This raises the question whether every classical state
$\omega_0$ can be ``quantized'' into a state $\omega$ with respect to
the star product. In other words, we ask whether every classical state
is the classical limit of some quantum state. Physically, this is
absolutely necessary as quantum theory is believed to be the more
fundamental description of nature. Fortunately, we can rely on the
following theorem \cite{bursztyn.waldmann:2005a}, even for the case of
matrices. But first we give a definition which shall simplify the
further considerations.

\begin{definition}[Square preserving map]
    \label{definition:SquarePreservingMap}
    A $\mathbb{C}[[\hbar]]$-linear map $\PresSquares{S} = \id +
    \sum_{r=1}^\infty\hbar^r\PresSquares{S}_r: C^\infty(M)[[\hbar]]$
    $\longrightarrow C^\infty(M)[[\hbar]]$ with differential operators
    $\PresSquares{S}_r$, $\PresSquares{S}(1)=1$, and
    $\PresSquares{S}(\overline{f}) = \overline{\PresSquares{S}(f)}$ is
    called \emph{preserving squares} with respect to $\star$, if there
    are formal series of differential operators $D_{r, I}:
    C^\infty(M)[[\hbar]] \longrightarrow C^\infty(M)[[\hbar]]$ for
    $r \in \mathbb{N}_0$ and $I$ running over a finite range (possibly
    depending on $r$) such that
    \begin{equation}
        \label{eq:SMapSquaresToSquares}
        \PresSquares{S}(\cc{f} \star g)
        =
        \sum_{r=0}^\infty \hbar^r \sum_{I}
        \overline{D_{r, I}(f)} D_{r, I}(g)
    \end{equation}
    for all $f, g \in C^\infty(M)[[\hbar]]$.
\end{definition}
\begin{remark}
    It is fairly simple to see that a map preserving squares according
    to Definition~\ref{definition:SquarePreservingMap} is in fact a
    completely positive map from the quantized algebra
    $(C^\infty(M)[[\hbar]], \star)$ to the classical algebra
    $(C^\infty(M)[[\hbar]], \cdot)$ with the \emph{undeformed}
    product.
\end{remark}
\begin{theorem}
    \label{theorem:ExistPosDef}
    Given a Hermitian star product $\star$, there exists a globally
    defined map $\PresSquares{S}$ preserving squares with respect to
    $\star$.
\end{theorem}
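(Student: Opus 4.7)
The plan is to construct $\PresSquares{S}$ by a local-to-global argument: build it on a suitable open cover by reducing to an explicit Wick-type local model, and then glue the local pieces via a partition of unity. The crucial flexibility is that being square-preserving in the sense of Definition~\ref{definition:SquarePreservingMap} is a much weaker condition than being a $^*$-algebra equivalence, so convex combinations with non-negative smooth weights still yield square-preserving maps, which makes the gluing step essentially formal.

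For the local model, choose an open cover $\{U_a\}$ of $M$ small enough that on each $U_a$ the Poisson structure admits a compatible almost-K\"ahler presentation (a Darboux chart with a chosen compatible almost complex structure in the symplectic case, a transverse symplectic slice in the general Poisson case). On each $U_a$ one has a concrete Wick-type Hermitian star product $\star_\wick^a$ with the key identity
\begin{equation*}
\cc{f} \star_\wick^a g
= \sum_{\beta \in \N_0^n}
\frac{(2\hbar)^{|\beta|}}{\beta!}
\cc{\partial_{\bar z}^\beta f}\,\partial_{\bar z}^\beta g,
\end{equation*}
which exhibits $\cc{f} \star_\wick^a g$ as a manifest sum of squares relative to the undeformed pointwise product. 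Hence the identity is square-preserving for $\star_\wick^a$. Invoking the theorem that two equivalent Hermitian star products are related by a Hermitian (i.e.\ $^*$-preserving) equivalence, obtain $T_a = \id + O(\hbar)$ with $T_a(f \star g) = T_a(f) \star_\wick^a T_a(g)$ and $T_a(\cc{f}) = \cc{T_a(f)}$ on $U_a$. Then $\PresSquares{S}_a := T_a$ is square-preserving for $\star|_{U_a}$, with the required operators $D^a_{r, I}$ obtained by composing $T_a$ with the multi-index operators in the Wick formula, regrouped by powers of $\hbar$.

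For the globalization, take a locally finite refinement of the cover together with real non-negative functions $\psi_a \in \cinf{M}$ with $\supp \psi_a \subset U_a$ and $\sum_a \psi_a^2 = 1$, and set $\PresSquares{S}(f) = \sum_a \psi_a^2\, \PresSquares{S}_a(f)$. Then $\PresSquares{S} = \id + O(\hbar)$ is a formal series of differential operators, $\PresSquares{S}(1) = 1$, and $\PresSquares{S}(\cc{f}) = \cc{\PresSquares{S}(f)}$. Moreover
\begin{equation*}
\PresSquares{S}(\cc{f} \star g)
= \sum_{a} \sum_{r, I} \hbar^r
\cc{\psi_a D^a_{r, I}(f)}\,
\psi_a D^a_{r, I}(g),
\end{equation*}
where each $\psi_a D^a_{r, I}$ is a globally defined differential operator (extended by zero outside $U_a$) and, by local finiteness of the cover, the $a$-sum is finite at every point. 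Relabeling $(a, I)$ as the index $I'$ in Definition~\ref{definition:SquarePreservingMap} gives exactly the required presentation.

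The main obstacle lies in the local step: producing on each chart a Wick-type model together with a Hermitian equivalence from $\star|_{U_a}$ to it. On symplectic manifolds this rests on Fedosov's explicit construction of Hermitian star products and the Hermitian strengthening of the equivalence theorem; in the general Poisson case one must work leafwise or along a transverse symplectic slice and combine this with a suitable positivity-preserving normal form. Once that local input is in place, the partition-of-unity gluing runs through without further difficulty.
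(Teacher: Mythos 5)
Your globalization step is essentially identical to the paper's: a quadratic partition of unity $\sum_\alpha \cc{\chi}_\alpha \chi_\alpha = 1$ subordinate to a chart cover, the observation that locality (bidifferentiability) of $\star$ lets you restrict $\cc{f}\star g$ chartwise, and the relabelling of the doubled index set $(a,I)$. One small remark there: the paper deliberately takes a \emph{finite} atlas (one always exists with $\dim M + 1$ not necessarily connected charts) because Definition~\ref{definition:SquarePreservingMap} requires the index $I$ to run over a finite range for each $r$; with your merely locally finite cover the combined index set may be infinite, so you should either pass to a finite atlas or explicitly argue that local finiteness still yields a well-defined presentation. That is cosmetic and easily repaired.

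The genuine gap is in the local step. The paper imports the local existence on open subsets of $\mathbb{R}^n$ as a black box from \cite{bursztyn.waldmann:2005a}, valid for \emph{arbitrary} Hermitian star products on Poisson manifolds. Your replacement --- produce a Wick-type local model $\star_\wick^a$ with the manifest sum-of-squares identity and transport it back via a Hermitian equivalence $T_a$ --- is correct and essentially complete in the \emph{symplectic} case (and is in fact how the cited reference treats that case): locally any two symplectic star products are equivalent, equivalent Hermitian star products are $^*$-equivalent, and $\cc{T_a(f)}\star_\wick^a T_a(g)$ regroups into the required form with $D^a_{r,I}$ proportional to $\partial^\beta_{\cc{z}}\circ T_a$. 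But the theorem is stated for general Poisson manifolds, and there your sketch does not go through: a generic Poisson structure admits no local almost-K\"ahler or Wick-type normal form, star products differentiate transversally to the symplectic leaves so ``working leafwise'' is not meaningful, and a transverse symplectic slice does not furnish a Hermitian equivalence of $\star|_{U_a}$ to a product with a manifest positivity structure. Supplying that local input for arbitrary Poisson structures is precisely the nontrivial content of \cite{bursztyn.waldmann:2005a}; as written, your argument proves the theorem only in the symplectic case.
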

\begin{proof}
    By \cite{bursztyn.waldmann:2005a} we know that for a Hermitian
    star product $\star$ on an open subset $U \subseteq \mathbb{R}^n$
    there exists a map preserving squares with respect to $\star$,
    denoted by
    \[\tag{$\ast$}
    \PresSquares{S} (\cc{f} \star g)
    =
    \sum_{r=0}^\infty \hbar^r \sum_I 
    \overline{D_{r, I}(f)} D_{r, I}(g).
    \]
    For the Poisson manifold $M$ with star product $\star$ we choose a
    \emph{finite} atlas. Note that we can always find an atlas
    consisting of $\dim(M) + 1$ not necessarily connected
    charts. Denote the domains of the charts by $U_\alpha \subseteq
    M$.  Next we choose a corresponding subordinate \emph{finite}
    quadratic partition of unity $\chi_\alpha \in C^\infty(M)$, i.e.\
    $\supp \chi_\alpha \subseteq U_\alpha$ and $\sum_\alpha
    \cc{\chi}_\alpha \chi_\alpha = 1$. Now let
    $\PresSquares{S}_\alpha$ be the locally available maps preserving
    squares with respect to $\star|_{U_\alpha}$ with corresponding
    locally defined differential operators $D_{r, I, \alpha}$. Then we
    set
    \[
    \PresSquares{S}(f)
    =
    \sum\nolimits_\alpha \cc{\chi}_\alpha \chi_\alpha
    \PresSquares{S}_\alpha\left(f\big|_{U_\alpha}\right).
    \]
    Clearly, this gives a globally well-defined formal series of
    differential operators with $\cc{\PresSquares{S}(f)} =
    \PresSquares{S}(\cc{f})$ and $\PresSquares{S}(1) = 1$. Moreover,
    since the star product is bidifferential, we have $(f \star
    g)|_{U\alpha} = f|_{U_\alpha} \star g|_{U_\alpha}$ and hence we
    can apply ($\ast$) to obtain
    \[
    \PresSquares{S}(\cc{f} \star g)
    =
    \sum_{r=0}^\infty \hbar^r \sum_{I, \alpha}
    \cc{\chi_\alpha D_{r, I, \alpha} (f)}
    \chi_\alpha D_{r, I, \alpha} (g).
    \]
\end{proof}
\begin{remark}
    \label{remark:CstarVersion}
    Recently, a $C^*$-algebraic version of this theorem was obtained
    for particular strict deformation quantizations in
    \cite{kaschek.neumaier.waldmann:2009a}.
\end{remark}
The proof of Theorem~\ref{theorem:ExistPosDef} immediately leads to
the following consequence.
\begin{corollary}
    For every Hermitian star product $\star$ on a Poisson manifold
    there exists an equivalent star product $\star'$ with the property
    that every classically positive linear functional $\omega_0$ is
    also positive with respect to $\star'$.
\end{corollary}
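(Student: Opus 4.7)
The plan is to use the square-preserving map $\PresSquares{S}$ from Theorem~\ref{theorem:ExistPosDef} to transport $\star$ to an equivalent star product on which classical positivity becomes manifest. Since $\PresSquares{S} = \id + \sum_{r\ge 1}\hbar^r \PresSquares{S}_r$ begins with the identity, it is invertible as a formal series of differential operators, and I would define
\[
    f \star' g := \PresSquares{S}\bigl(\PresSquares{S}^{-1}(f) \star \PresSquares{S}^{-1}(g)\bigr).
\]
By construction $\PresSquares{S}$ is an algebra isomorphism from $(C^\infty(M)[[\hbar]], \star)$ to $(C^\infty(M)[[\hbar]], \star')$ whose zeroth order is the identity, hence $\star'$ is an associative bidifferential deformation of the pointwise product quantizing the same Poisson structure as $\star$, i.e.\ a star product equivalent to $\star$. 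Hermiticity of $\star'$ is then immediate from the fact that $\PresSquares{S}$, and hence $\PresSquares{S}^{-1}$, commutes with complex conjugation by Definition~\ref{definition:SquarePreservingMap}: combined with the Hermiticity of $\star$ this yields $\cc{f \star' g} = \cc{g} \star' \cc{f}$.

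For the positivity claim, let $\omega_0: C^\infty(M) \to \mathbb{C}$ be a classically positive functional, $\mathbb{C}[[\hbar]]$-linearly extended to $C^\infty(M)[[\hbar]]$. For $f \in C^\infty(M)[[\hbar]]$ and $g := \PresSquares{S}^{-1}(f)$ the explicit identity \eqref{eq:SMapSquaresToSquares} gives
\[
    \omega_0\bigl(\cc{f} \star' f\bigr)
    = \omega_0\bigl(\PresSquares{S}(\cc{g} \star g)\bigr)
    = \sum_{r=0}^\infty \hbar^r \sum_{I}
       \omega_0\bigl(\cc{D_{r,I}(g)}\, D_{r,I}(g)\bigr).
\]
Each summand is of the form $\omega_0(\cc{h}h)$ with $h \in C^\infty(M)[[\hbar]]$, which is non-negative in $\mathbb{R}[[\hbar]]$ as soon as $\omega_0$ is classically positive. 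Hence the full expression is non-negative in $\mathbb{R}[[\hbar]]$, which is exactly positivity of $\omega_0$ with respect to $\star'$.

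The only step that could look nontrivial is the passage from classical positivity of $\omega_0$ on $C^\infty(M)$ to positivity of its $\mathbb{C}[[\hbar]]$-linear extension on $(C^\infty(M)[[\hbar]], \cdot)$; but this is exactly the equivalence that the authors have already recorded after \eqref{eq:omegaPositive}, and follows from an order-by-order Cauchy--Schwarz argument using that a classically positive functional is a positive Borel measure. With that point in hand, the corollary is just a reformulation of Theorem~\ref{theorem:ExistPosDef} in which the completely positive map $\PresSquares{S}$ has been absorbed into the star product itself as a change of gauge within the equivalence class of $\star$, so I do not expect any serious obstacle.
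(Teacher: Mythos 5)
Your proposal is correct and follows exactly the paper's route: the paper's proof consists precisely of taking the square-preserving map $\PresSquares{S}$ from Theorem~\ref{theorem:ExistPosDef} and defining $f \star' g = \PresSquares{S}(\PresSquares{S}^{-1}(f) \star \PresSquares{S}^{-1}(g))$, asserting without further detail that this ``does the job.'' You have merely filled in the verification (Hermiticity of $\star'$, the computation of $\omega_0(\cc{f}\star' f)$ via \eqref{eq:SMapSquaresToSquares}, and the reduction to positivity of the classical extension), all of which is sound.
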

\begin{proof}
    This is now easy, as we take a map $\PresSquares{S}$ preserving
    squares with respect to $\star$. Then the star product $f \star' g
    = \PresSquares{S} (\PresSquares{S}^{-1}(f) \star
    \PresSquares{S}^{-1}(g))$ is easily shown to do the job.
\end{proof}

\begin{remark}
    \label{remark:CorrectionsToPositiveFunctional}
    Rephrasing the result from \cite{bursztyn.waldmann:2005a} in terms
    of Theorem~\ref{theorem:ExistPosDef} says that every classical
    positive linear functional $\omega_0$ can be deformed into a
    positive linear functional with respect to a Hermitian star
    product. Indeed, $\omega_0 \circ \PresSquares{S}$ will be such a
    deformation, even universal for all $\omega_0$ once
    $\PresSquares{S}$ is specified.  In general, correction terms in
    higher orders of $\hbar$ are necessary to obtain
    positivity. Moreover, they are by far not unique and neither is
    the map $\PresSquares{S}$.  This is of course to be expected, both
    from a physical and mathematical point of view. Finally, note that
    each term $\omega_0 \circ \PresSquares{S}_r$ is continuous in the
    smooth topology, since the classical functional $\omega_0$ is
    continuous and the differential operators $\PresSquares{S}_r$ are
    as well.
\end{remark}

After this discussion of states we also need a notion of time
evolution for star product algebras. Here we can rely on the following
facts. For a given Hamiltonian $H \in C^\infty(M)[[\hbar]]$, where we
might even allow for some $\hbar$-dependent correction terms, we
consider the Heisenberg equation
\begin{equation}
    \label{eq:HeisenbergEquation}
    \frac{\D}{\D t} f(t) = \frac{\I}{\hbar} [H, f(t)]_\star
\end{equation}
for $f(t) \in C^\infty(M)[[\hbar]]$. Note that the right-hand side is
a well-defined formal power series since the commutator vanishes in
zeroth order. For simplicity, we assume that the Hamiltonian vector
field corresponding to the zeroth order $H_0$ of $H$ has a complete
flow $\Phi_t$. In this case, one can show that
\eqref{eq:HeisenbergEquation} has a solution for all times with the
following properties: There exists a formal series of time-dependent
differential operators $T_t = \id + \sum_{r=1}^\infty \hbar^r
T_t^{(r)}$ on $M$ such that
\begin{equation}
    \label{eq:QuantizedTimeEvolution}
    \sze{A}{t} = \Phi_t^* \circ T_t:
    C^\infty(M)[[\hbar]] \longrightarrow C^\infty(M)[[\hbar]]
\end{equation}
is a one-parameter group of automorphisms of $\star$ with $f(t) =
\sze{A}{t} f$ being the unique solution of
\eqref{eq:HeisenbergEquation} with initial condition $f(0) = f$.
Moreover, $\sze{A}{t}$ commutes with the commutator $[H,
\cdot]_\star$ and we have conservation of energy $\sze{A}{t} H = H$ as
usual. Finally, if $\star$ is a Hermitian star product and $H =
\cc{H}$ a real Hamiltonian then $\sze{A}{t}$ is even a
$^*$-automorphism for each $t$. For details on this quantized version
of the classical time evolution we refer to
\cite[Sect.~6.3.4]{waldmann:2007a} and references therein.

After this preparatory discussion we come back to our original
situation of a coupled total system $\Sys \times \Bad$. As we already
have a nice separation of the total Poisson structure into the Poisson
structure of the system and the one of the bath, we shall require the
same feature also for the quantization. Thus, we assume to have
Hermitian star products $\starSys$ on $\Sys$ and $\starBad$ on $\Bad$,
respectively. Then this immediately induces a Hermitian star product
$\star = \starSys \tensor \starBad$ on $\Sys \times \Bad$ in such a
way that
\begin{equation}
    \label{eq:psSprBHoms}
    \left(C^\infty(\Sys)[[\hbar]],\starSys\right)
    \stackrel{\pr_{\System}^*}{\longrightarrow}
    \left(C^\infty(\Sys \times \Bad)[[\hbar]], \star\right)
    \stackrel{\pr_{\Bad}^*}{\longleftarrow}
    \left(C^\infty(\Bad)[[\hbar]], \starBad\right)
\end{equation}
are both $^*$-homomorphisms of the involved star products. On
factorizing functions we have
\begin{equation}
    \label{eq:StarFactorizingFunctions}
    f \star g
    =
    (f_{\System} \starSys g_{\System})
    \otimes
    (f_{\Bath} \starBad g_{\Bath}),
\end{equation}
where $f = f_{\System} \otimes f_{\Bath}$ and $g = g_{\System} \otimes
g_{\Bath}$ for $f_{\System}, g_{\System} \in C^\infty(\Sys)[[\hbar]]$
and $f_{\Bath}, g_{\Bath} \in C^\infty(\Bad)[[\hbar]]$. Clearly,
\eqref{eq:PoissonBracketOnFactorizingFunctions} becomes the first
order limit of \eqref{eq:StarFactorizingFunctions} in the commutators.
\begin{remark}
    \label{remark:ObservablesAreObservables}
    It will be crucial for our approach that the algebraic structure
    of the observables is a priori given and will stay untouched. The
    physical interpretation is, that whatever the time evolution will
    be, the way how certain quantities, the observables, are measured
    is \emph{independent} of any sort of dynamics but a purely
    \emph{kinematical} property of the physical system. Thus our star
    products $\star$, $\starSys$, and $\starBad$ will be given once
    and for all and not changed by the open time evolution. Note that
    this is not the only possibility to deal with open systems: in
    \cite{dito.turrubiates:2006a} the star product itself was modified
    in order to describe a damped harmonic oscillator.
\end{remark}

It is now rather obvious what a good definition of a quantized open
Hamiltonian time evolution in deformation quantization should be:
\begin{definition}[Quantized open Hamiltonian time evolution]
    \label{definition:QuantizedOpenTimeEvolution}
    Let $H \in C^\infty(\Sys \times \Bad)[[\hbar]]$ be a Hamiltonian
    with complete time evolution $\sze{A}{t}$ and let $\omega:
    C^\infty(\Bad)[[\hbar]] \longrightarrow \mathbb{C}[[\hbar]]$ be a
    positive $\mathbb{C}[[\hbar]]$-linear functional. Then the
    \emph{quantized open Hamiltonian time evolution} of $\Sys$ with
    respect to $\omega$ is
    \begin{equation}
        \label{eq:QuantizedOpenTimeEvolution}
        \sze{A}{t}^\omega = (\id \tensor \omega) \circ
        \sze{A}{t} \circ \pr_{\System}^*:
        C^\infty(\Sys)[[\hbar]]
        \longrightarrow
        C^\infty(\Sys)[[\hbar]].
    \end{equation}
\end{definition}
\begin{remark}
    \label{remark:CompletedTensorProduct}
    The above completed tensor product is understood order by order in
    $\hbar$. Thus we have to require that $\omega = \sum_{r=0}^\infty
    \hbar^r \omega_r$ is \emph{continuous} in each order of $\hbar$,
    i.e.\ each $\omega_r$ is a continuous linear functional with
    respect to the smooth topology. In view of
    Theorem~\ref{theorem:ExistPosDef} and
    Remark~\ref{remark:CorrectionsToPositiveFunctional} this seems to
    be a very reasonable assumption.
\end{remark}
\begin{remark}
    \label{remark:Quantizability}
    Putting Theorem~\ref{theorem:ExistPosDef},
    Remark~\ref{remark:CorrectionsToPositiveFunctional}, the existence
    of Hermitian star products in \cite{kontsevich:2003a}, and the
    existence of the quantum time evolution of Equation
    \eqref{eq:QuantizedTimeEvolution} together it is easy to see that
    any classical open Hamiltonian time evolution can be quantized
    into a quantized open Hamiltonian time evolution. Conversely, the
    classical limit of any quantized open Hamiltonian time evolution
    is a classical open Hamiltonian time evolution for the classical
    limit of the Hamiltonian and with respect to the classical limit
    of the quantum state by construction as
    $\sze{A}{t}^\omega=\rk{\Phi_t^{\omega_0}}^*+\mathcal{O}(\hbar)$.
\end{remark}

In view of Definition~\ref{definition:QuantizedOpenTimeEvolution}
it is tempting to believe that the quantized open Hamiltonian time
evolution $\sze{A}{t}^\omega$ is \emph{completely positive}. Indeed,
if we would have used the algebraic tensor product in
\eqref{eq:QuantizedOpenTimeEvolution} instead of the completed one
$\tensor$ in every order of $\hbar$, then this would be a trivial
statement: the algebraic tensor product of the completely positive
maps $\id$ and $\omega$ is again completely positive, and so is the
composition with the completely positive $^*$-homomorphisms
$\sze{A}{t}$ and $\pr^*_\System$. However, the crucial point is that
the Fr\'echet topology of the smooth functions and the $\hbar$-adic
topology originating from the ring ordering are not very well
compatible. In fact, it is not clear whether the completed tensor
product is completely positive or not.  Note that this is rather
different from the $C^*$-algebraic case where the completed projective
tensor product of completely positive maps is always completely
positive.  From that point of view, the following principal result on
the quantized open Hamiltonian time evolution is non-trivial:
\begin{theorem}
    \label{theorem:SomeQuantizedOpenIsCP}
    Let $\omega$ be a positive $\mathbb{C}[[\hbar]]$-linear functional
    on $\rk{C^\infty(\Bad)[[\hbar]],\starBad}$ of the form
    \begin{align}
        \label{eq:ClassPosLinFunctDeformed}
        \omega = \omega_0 \circ \PresSquares{S}    
    \end{align}
    with $\PresSquares{S}$ preserving squares with respect to
    $\starBad$. Then any quantized open Hamiltonian time evolution
    with respect to $\omega$ is completely positive.
\end{theorem}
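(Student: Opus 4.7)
The plan is to factor the quantized open time evolution as a composition of completely positive maps through an auxiliary $^*$-algebra. Specifically, introduce the \emph{mixed product} $\star'$ on $C^\infty(\Sys \times \Bad)[[\hbar]]$ given by $\starSys$ in the system variables and the undeformed pointwise product in the bath variables. Since the bidifferential operators defining $\starSys$ and $\starBad$ only differentiate their respective variables, $\star'$ is a well-defined associative Hermitian product. Then the plan is to write
\begin{equation*}
    \sze{A}{t}^\omega = (\id \tensor \omega_0) \circ (\id \tensor \PresSquares{S}) \circ \sze{A}{t} \circ \pr_{\System}^*
\end{equation*}
and show each factor is completely positive with respect to the appropriate products: $\pr_{\System}^*$ and $\sze{A}{t}$ are $^*$-homomorphisms for $\star$; $\id \tensor \PresSquares{S}$ goes from $(C^\infty(\Sys \times \Bad)[[\hbar]], \star)$ to $(C^\infty(\Sys \times \Bad)[[\hbar]], \star')$; and $\id \tensor \omega_0$ goes from $(C^\infty(\Sys \times \Bad)[[\hbar]], \star')$ to $(C^\infty(\Sys)[[\hbar]], \starSys)$.

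For the map $\id \tensor \PresSquares{S}$, I would first verify on factorizing functions that, using the commutation of bath- and system-side operations,
\begin{equation*}
    (\id \tensor \PresSquares{S})(\cc{F} \star G)
    = \sum_{r=0}^\infty \hbar^r \sum_I \overline{(\id \tensor D_{r, I})(F)} \star' (\id \tensor D_{r, I})(G),
\end{equation*}
which follows directly from the square-preserving property of $\PresSquares{S}$ applied in the bath variable (with the system variable treated as a parameter). Extension to arbitrary $F, G$ follows by density/continuity, and the matrix-valued version is analogous. Hence $\id \tensor \PresSquares{S}$ preserves squares from $\star$ to $\star'$ in the sense of a direct generalization of Definition~\ref{definition:SquarePreservingMap}, and is therefore completely positive by the argument already used in Theorem~\ref{theorem:ExistPosDef}.

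For the map $\id \tensor \omega_0$, the key observation is that $\star'$ is pointwise in the bath variable, so for any $F \in M_n(C^\infty(\Sys \times \Bad)[[\hbar]])$ and any fixed $\xb \in \Bad$ one has $(\cc{F}^T \star' F)(\cdot, \xb) = \cc{F(\cdot, \xb)}^T \starSys F(\cdot, \xb)$, which is manifestly a positive matrix element in $M_n(C^\infty(\Sys)[[\hbar]], \starSys)$. Since $\omega_0$ is classically positive, hence (by Remark~\ref{remark:PosFunctBorelMeasure}) a positive Borel measure with compact support, $(\id \tensor \omega_0)(\cc{F}^T \star' F)$ is the integral of a pointwise positive matrix-valued expression against a positive measure. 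To promote this geometric picture to an algebraic positivity statement, I would test against an arbitrary positive functional $\Omega$ on $(M_n(C^\infty(\Sys)[[\hbar]]), \starSys)$ and invoke a Fubini-type identity $\Omega \circ (\id \tensor \omega_0) = \Omega \tensor \omega_0$, reducing the question to the positivity of the tensor product of the positive functionals $\Omega$ (on the deformed system algebra) and $\omega_0$ (on the undeformed bath algebra); this in turn follows from the standard fact that the tensor product of completely positive maps over an ordered ring is completely positive.

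The main obstacle, as the text itself flags just before the theorem, is the mismatch between the Fréchet smooth topology and the $\hbar$-adic topology that governs the completed tensor product. All of the above identities must be verified order by order in $\hbar$, and the Fubini-type exchange between $\omega_0$ and $\Omega$ must be justified using the continuity of each coefficient $\omega_{0,r}$ of $\omega_0$ (trivially, since $\omega_0$ is classical) together with the fact that the $D_{r,I}$ appearing in $\PresSquares{S}$ are differential operators, hence continuous, so that the map $(\id \tensor \omega_0) \circ (\id \tensor \PresSquares{S})$ is a legitimate continuous extension of an algebraic tensor product of completely positive maps to the completed tensor product on which $\sze{A}{t} \circ \pr_\System^*$ takes values. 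Once this technical point is handled in each order of $\hbar$, composition yields complete positivity of $\sze{A}{t}^\omega$.
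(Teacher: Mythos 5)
Your overall strategy---reduce everything to the complete positivity of $\id \tensor \omega$, use the square-preserving property of $\PresSquares{S}$ together with $\star = \starSys \tensor \starBad$ to rewrite $(\id\tensor\PresSquares{S})(F^*\star F)$ as an $\hbar$-weighted sum of squares of a product that is pointwise in the bath variable, and then test against positive functionals of the system algebra---is essentially the paper's proof. Your intermediate algebra $\bigl(C^\infty(\Sys\times\Bad)[[\hbar]],\star'\bigr)$ is a (perfectly legitimate) repackaging of the paper's direct use of the evaluations $\iota_{\xb}^*$, and the identity you verify for $\id\tensor\PresSquares{S}$ on factorizing functions is exactly the first display in the paper's proof. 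Everything up to and including the complete positivity of $\id\tensor\PresSquares{S}$ is sound.

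The justification of the last step, however, does not work as stated. You reduce to the positivity of $\Omega\tensor\omega_0$ on squares $G^*\star' G$ and invoke ``the standard fact that the tensor product of completely positive maps over an ordered ring is completely positive.'' That fact (Remark~\ref{remark:CompletelyPositiveMaps}) concerns the \emph{algebraic} tensor product, whereas $G$ lives in the \emph{completed} tensor product; whether complete positivity survives the completion is precisely the difficulty the paper flags immediately before the theorem and that the theorem is meant to circumvent, so the appeal is circular. Nor does order-by-order continuity repair it: the positive cone of $\mathbb{R}[[\hbar]]$ is not closed in the relevant topology (e.g.\ $\frac{1}{n}-\hbar\ge 0$ for all $n$ while the limit $-\hbar<0$), so a continuous extension of a positive map need not stay positive. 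The correct completion of the argument is the one your ``geometric picture'' sentence already contains and then abandons: since $\star'$ is pointwise in $\xb$, the function $\xb\mapsto\Omega\bigl(\iota_{\xb}^*(G)^*\starSys\iota_{\xb}^*(G)\bigr)$ is a formal series of smooth functions that is positive in the sense of $\mathbb{R}[[\hbar]]$ at \emph{every point}, and one then needs the lemma that integrating such a pointwise positive formal series against the compactly supported positive measure $\omega_0$ yields a positive formal series. This lemma is not a continuity statement but a short induction on the order of $\hbar$ (if the integral of the lowest-order coefficient vanishes, that coefficient vanishes on $\supp\omega_0$ and one passes to the next order there), needed because the order at which the leading positive coefficient sits may vary from point to point. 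That induction is what actually carries the theorem; it is the content of the paper's step ``$\omega_0\circ(\mu\tensor\PresSquares{S})(F^*\star F)\ge 0$.''
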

\begin{proof}
    As $\pr^*_\System$ and $\sze{A}{t}$ are $^*$-homomorphisms, the
    only thing left to show is that $\id\tensor\omega$ is completely
    positive. We extend $\PresSquares{S}$ to matrices as usual.  For
    $F \in M_n(C^\infty(\Sys \times \Bad)[[\hbar]])$ we have
    \[
    \iota_{\xb}^* \left(
        (\id \tensor \PresSquares{S})(F^* \star F)
    \right)
    =
    \sum_{r=0} \hbar^r \sum_I
    \iota_{\xb}^*\left(D_{r, I} (F)\right)^*
    \starSys
    \iota_{\xb}^*\left(D_{r, I} (F)\right),
    \]
    since the restriction to $\xb \in \Bad$ commutes with the
    \emph{pointwise} products in \eqref{eq:SMapSquaresToSquares}. Now
    let $\mu: C^\infty(\Sys)[[\hbar]] \longrightarrow
    \mathbb{C}[[\hbar]]$ be a positive $\mathbb{C}[[\hbar]]$-linear
    functional with respect to $\starSys$. Then for every $\xb$
    \[
    \iota_{\xb}^* \left(
        (\mu \tensor \PresSquares{S})(F^* \star F)
    \right)
    =
    \sum_{r=0} \hbar^r \sum_I
    \mu \left(
        \iota_{\xb}^* \left(D_{r, I} (F)\right)^*
        \starSys
        \iota_{\xb}^* \left(D_{r, I} (F)\right)
    \right)
    \in M_n(\mathbb{C})[[\hbar]]
    \]
    is \emph{positive}. So if $\omega_0$ is classically positive, we
    conclude that $\omega_0 \circ (\mu \tensor \PresSquares{S}) (F^*
    \star F) \ge 0$. But $\omega_0 \circ (\mu \tensor \PresSquares{S})
    = \mu \circ (\id \tensor (\omega_0 \circ \PresSquares{S})) = \mu
    \circ (\id \tensor \omega)$. Thus, $\mu \circ (\id \tensor
    \omega)(F^* \star F)$ is positive for all positive functionals
    $\mu$. This implies that $(\id \tensor \omega)(F^* \star F)$ is a
    positive algebra element for all matrices $F$ and hence $\id
    \tensor \omega$ is a completely positive map as claimed.
\end{proof}
\begin{remark}
    \label{remark:MoreQuantumStatesPossible}
    The assertion of Theorem~\ref{theorem:SomeQuantizedOpenIsCP} is
    actually true for more quantum states than the ones of type
    \eqref{eq:ClassPosLinFunctDeformed}: we will see examples later on
    in Proposition~\ref{proposition:KMSCPAndContinuous}. We also note
    that a possible failure of the complete positivity of
    $\sze{A}{t}^\omega$ should be seen as an artifact of the rather
    fine (and not too physical) $\hbar$-adic topology of formal power
    series in $\hbar$. One would expect reasonable behaviour as soon
    as one enters a convergent regime like strict deformation
    quantization.
\end{remark}
\begin{remark}
    \label{remark:NotAutomorphismNoNONOOOO}
    In general, the quantized open Hamiltonian time evolution
    $\sze{A}{t}^\omega$ is no $^*$-automorphism of
    $\rk{\cinf{\Sys}[[\hbar]],\star_{\System}}$.  Furthermore, a close
    look at Equation \eqref{eq:QuantizedOpenTimeEvolution} shows that
    usually $\sze{A}{s}^\omega \circ \sze{A}{t}^\omega \neq
    \sze{A}{s+t}^\omega$ as expected from a microscopic system.
\end{remark}
\begin{remark}
    \label{remark:SuperSpinAndStuff}
    Using the notions of super manifolds and star products on super
    symplectic manifolds according to \cite{bordemann:2000a,
      eckel:2000a} one can easily extend our formalism to this
    framework. This way, one can incorporate spin systems.
\end{remark}

%
%

\section{Linearly Coupled Harmonic Oscillators I: Generalities}
\label{section:ExampleLinearlyCoupledHarmOszisI}

As an example, consider the well-known linear coupling of two
one-dimensional harmonic oscillators.  We shall describe a
one-dimensional harmonic oscillator as a Hamiltonian system $(M, \pi,
H)$, given by $M = T^*\R_q\simeq\R^2_{q,p}$, with Hamiltonian $H(q,p)
= \frac{1}{2m}p^2 + \frac{m\nu^2}{2}q^2$, where $m,\nu\in\R^+$. The
Poisson bracket is then determined by
\begin{align*}
  \gk{q, p} = 1,
  \qquad
  \gk{q, q} = 0 = \gk{p, p}.
\end{align*}
Now let us take $\Sys = M = \Bad$.  The Hamiltonian system $(\Sys
\times \Bad, \pi, H)$ describing the linearly coupled identical
harmonic oscillators is then given by the smooth manifold $\Sys \times
\Bad \simeq \R^2_{\qs,\ps} \times \R^2_{\qb,\pb}$, with the
corresponding Poisson bracket as given by Equation
\eqref{eq:PoissonBracketOnFactorizingFunctions}.  In the following, we
shall use the same symbols $\qs, \ps, \qb, \pb$ for the coordinate
functions on $\Sys$, $\Bad$, and $\Sys \times \Bad$, respectively, in
order to simplify our notation. In the same spirit, we simply write $H
= H_{\System}^{\phantom{*}} + H_{\Bath}^{\phantom{*}} +
H_\Interaction^{\phantom{*}}$ for the total Hamiltonian without the
explicit use of $\pr_{\System}^*$ and $\pr_{\Bath}^*$. For the
linearly coupled harmonic oscillators the interaction term is given by
$H_\Interaction^{\phantom{*}} = \frac{\kappa}{2}(\qs-\qb)^2$, with
$\kappa\in\R^+$ being the coupling constant.

Using the new and still global coordinate functions
\begin{align*}
    \qen = \frac{1}{\sqrt{2}}(\qs+\qb),
    \quad
    \pen = \frac{1}{\sqrt{2}}(\ps+\pb),
    \quad
    \qzn = \frac{1}{\sqrt{2}}(\qs-\qb),
    \quad
    \pzn = \frac{1}{\sqrt{2}}(\ps-\pb),
\end{align*}
we can bring the total Hamiltonian to normal form and find the well-known
expression
\begin{align}
  \label{eq:NormalHamiltonian}
  H
  =
  \frac{1}{2m}\left(\pen^2 + \pzn^2\right)
  +
  \frac{m\nu^2}{2} \qen^2
  +
  \frac{m\nu^2_\kappa}{2} \qzn^2
  \quad
  \textrm{with}
  \quad
  \nu_\kappa^2 = \nu^2+\frac{2\kappa}{m}.
\end{align}
The classical time evolution $\Phi_t$ is known to be a linear map for
all $t$ which we can express in matrix form as
{\small
  \begin{align*}
      \Phi_t&=\frac{1}{2}
      \begin{pmatrix}
          \cos(\nu t)+\cos({\nu_\kappa}t)
          &\frac{\sin(\nu t)}{m\nu}
          + \frac{\sin({\nu_\kappa}t)}{m{\nu_\kappa}}
          &\cos(\nu t)-\cos({\nu_\kappa}t)
          &\frac{\sin(\nu t)}{m\nu}
          - \frac{\sin({\nu_\kappa}t)}{m{\nu_\kappa}}
          \\
          -m(\nu\sin(\nu t)+{\nu_\kappa}\sin({\nu_\kappa}t))
          &\cos(\nu t) + \cos({\nu_\kappa}t)
          &-m(\nu\sin(\nu t) - {\nu_\kappa}\sin({\nu_\kappa}t))
          &\cos(\nu t)-\cos({\nu_\kappa}t)
          \\
          \cos(\nu t)-\cos({\nu_\kappa}t)
          &\frac{\sin(\nu t)}{m\nu}
          - \frac{\sin({\nu_\kappa}t)}{m{\nu_\kappa}}
          &\cos(\nu t)+\cos({\nu_\kappa}t)
          &\frac{\sin(\nu t)}{m\nu} 
          + \frac{\sin({\nu_\kappa}t)}{m{\nu_\kappa}}
          \\
          -m(\nu\sin(\nu t)-{\nu_\kappa}\sin({\nu_\kappa}t))
          &\cos(\nu t) - \cos({\nu_\kappa}t)
          &-m(\nu\sin(\nu t) + {\nu_\kappa}\sin({\nu_\kappa}t))
          &\cos(\nu t)+\cos({\nu_\kappa}t)
      \end{pmatrix}
\end{align*}}
with respect to the global linear coordinates $\qs$, $\ps$, $\qb$,
$\pb$.  Thus, the open time evolution $\Phi_t^{\omega_0}$ of the open
subsystem with regard to the state $\omega_0$ of the bath takes the
form
\begin{align}
    \label{eq:OpenFlowR4}
    \begin{split}
        \rk{\Phi_t^{\omega_0}}^*
        \begin{pmatrix}
            \qs 
            \\
            \ps 
        \end{pmatrix}
        &=
        \frac{1}{2}
        \begin{pmatrix}
            \cos(\nu t)+\cos({\nu_\kappa}t)
            &\frac{\sin(\nu t)}{m\nu}
            + \frac{\sin({\nu_\kappa}t)}{m{\nu_\kappa}}
            \\
            -m(\nu\sin(\nu t)+{\nu_\kappa}\sin({\nu_\kappa}t))
            &\cos(\nu t)+\cos({\nu_\kappa}t)
        \end{pmatrix}
        \begin{pmatrix}
            \qs
            \\
            \ps
        \end{pmatrix}
        \\
        &\quad+
        \frac{1}{2}
        \begin{pmatrix}
            \omega_0(\qb)\left(\cos(\nu t) - \cos({\nu_\kappa}t)\right)
            +
            \omega_0(\pb)\left(
                \frac{\sin(\nu t)}{m\nu}
                -\frac{\sin({\nu_\kappa}t)}{m\nu_\kappa}
            \right)
            \\
            -\omega_0(\qb)m \left(
                \nu\sin(\nu t)
                - {\nu_\kappa}\sin({\nu_\kappa}t)
            \right)
            +
            \omega_0(\pb)\left(\cos(\nu t) - \cos({\nu_\kappa}t)\right)
        \end{pmatrix}.
    \end{split}
\end{align}

Analogously to the classical case we shall use the normal coordinates
in order to simplify the computation of the quantum
time evolution of the total system. Moreover, it will be advantageous
to combine the real $\qen$, $\pen$, $\qzn$, and $\pzn$ into complex
coordinates which will play the role of annihilation and creation
``operators'' later on. We set
\begin{align}
  \label{eq:NormalToComplex}
  \begin{array}{l}
      \z{k} = \sqrt{m\nu_k}\qin{k}+\I\frac{1}{\sqrt{m\nu_k}}\pin{k},
      \\
      \zb{k} = \sqrt{m\nu_k}\qin{k}-\I\frac{1}{\sqrt{m\nu_k}}\pin{k}
  \end{array}
  \quad
  \textrm{and hence}
  \quad
  \begin{array}{l}
      \qin{k} = \frac{1}{2\sqrt{m\nu_k}}(\z{k}+\zb{k}),
      \\
      \pin{k} = \frac{\sqrt{m\nu_k}}{2\I}(\z{k}-\zb{k})
  \end{array}
\end{align}
for $k = 1, 2$ and $\nu_1 = \nu$, $\nu_2 = \nu_\kappa$. With respect
to these global coordinate functions on $M$ the total Hamiltonian can be
written as $H = \frac{\nu}{2}\z{1}\zb{1} + \frac{{\nu_\kappa}}{2}
\z{2}\zb{2}$.  For the Poisson brackets one obtains $\gk{\z{k}, \z{l}}
= 0 = \gk{\zb{k},\zb{l}}$ and $\gk{\z{k}, \zb{l}} = \frac{2}{\I}
\delta_{kl}$ for all $k, l = 1, 2$. The Hamiltonian for the system will
take a slightly more complicated form, namely
\begin{equation}
    \label{eq:HSystemInzEinszZweiVariablen}
    H_{\System}
    =
    \frac{\nu}{4} \cc{z}_1 z_1
    +
    \frac{\nu^2}{16 \nu_\kappa} (z_2 + \cc{z}_2)^2
    -
    \frac{\nu_\kappa}{16} (z_2 - \cc{z}_2)^2
    +
    \frac{\nu^2}{8\sqrt{\nu\nu_\kappa}}
    (z_1 + \cc{z}_1)(z_2 + \cc{z}_2)
    -
    \frac{\sqrt{\nu\nu_\kappa}}{8}
    (z_1 - \cc{z}_1)(z_2 - \cc{z}_2).
\end{equation}

On the other hand, we will also need ``factorizing'' complex
coordinates with respect to the original Darboux coordinates on the
system $\Sys$ and the bath $\Bad$. Hence we set
\begin{align}
  \label{eq:DarbouxToComplex}
  \begin{array}{l}
  \z{\System} = \sqrt{m\nu}\qs + \I\frac{1}{\sqrt{m\nu}}\ps,
  \\
  \zb{\System} = \sqrt{m\nu}\qs - \I\frac{1}{\sqrt{m\nu}}\ps
  \end{array}
  \quad
  \textrm{and}
  \quad
  \begin{array}{l}
  \z{\Bath} = \sqrt{m\nu}\qb + \I\frac{1}{\sqrt{m\nu}}\pb,
  \\
  \zb{\Bath} = \sqrt{m\nu}\qb - \I\frac{1}{\sqrt{m\nu}}\pb.
  \end{array}
\end{align}
In these coordinates, the Hamiltonians of the system and the bath are
given by $H_\System = \frac{\nu}{2}\z{\System}\zb{\System}$ and
$H_\Bath = \frac{\nu}{2}\z{\Bath}\zb{\Bath}$. The interaction term now
reads $H_{\Interaction} = \frac{\kappa}{8m\nu} \rk{\z{\System} +
  \zb{\System} - \z{\Bath} - \zb{\Bath}}^2$.  Again, for the Poisson
brackets one finds $\gk{\z{k}, \z{l}} = 0 = \gk{\zb{k},\zb{l}}$ and
$\gk{\z{k}, \zb{l}} = \frac{2}{\I} \delta_{kl}$ for all $k,l =
\System, \Bath$.

After these preparations, we can specify the star product on the total
algebra of observables. We take the Weyl-Moyal star product on the
total system $\Sys \times \Bad$ defined by
\begin{align}
    \label{eq:WeylFlach}
    \begin{split}
        f&\star^\weyl g
        \\
        &=
        \sum_{r=0}^\infty
        \sum_{l=0}^r
        \rk{-\frac{\I}{2}}^{r-l}
        \rk{\frac{\I}{2}}^l
        \frac{\hbar^r}{l!(r-l)!}
        \sum_{i_1, \ldots, i_r = 1}^2
        \frac{\partial^r f}
        {
          \partial q_{i_1} \cdots \partial q_{i_l}
          \partial p_{i_{l+1}}\cdots\partial p_{i_{r}}
        }
        \frac{\partial^{r}g}
        {          
          \partial p_{i_1}\cdots\partial p_{i_l}
          \partial q_{i_{l+1}}\cdots\partial q_{i_{r}}
        },
    \end{split}
\end{align}
for $f, g \in \cinf{\Sys \times \Bad}[[\hbar]]$, see e.g.\
\cite{bayen.et.al:1978a} and \cite[Chap.~5]{waldmann:2007a}.
\begin{remark}
    \label{remark:WeylUnique}
    The Weyl-Moyal star product on a flat symplectic phase space
    $\R^{2n}$ is uniquely determined by the requirement of invariance
    under the affine symplectic group. Under the usual quantization
    map into differential operators it corresponds to the total
    symmetrization, see e.g.\ \cite[Chap.~5]{waldmann:2007a} for a
    detailed discussion. We also note that $\star^\weyl =
    \star^\weyl_\System \tensor \star^\weyl_\Bath$ as required by our
    general framework.
\end{remark}

While the Weyl-Moyal star product is the most natural one with respect
to phase space symmetries, it has certain technical disadvantages:
when dealing with harmonic oscillators, for technical reasons it will
be more convenient to employ a Wick star product. Such a Wick star
product is no longer unique, but depends on the choice of a compatible
linear complex structure on the phase space which is nothing but the
choice of a harmonic oscillator.  Therefore, we will have different
Wick star products adapted to the various harmonic oscillators on
hand: either with or without the coupling. In detail, one passes from
the Weyl-Moyal star product to the Wick star product by means of an
equivalence transformation explicitly given by
\begin{equation}
    \label{eq:WickEquivalence}
    S = \exp\rk{\hbar \Delta}
    \quad
    \textrm{with}
    \quad
    \Delta = \sum_{k=1}^2
    \frac{\partial^2}{\partial z_k\partial \overline{z}_k}.    
\end{equation}
Then the Wick star product $\star^\wick$ is defined by $f \star^\wick
g = S ( S^{-1}(f) \star^\weyl S^{-1} (g))$. Explicitly, $\star^\wick$
is given by
\begin{equation}
    \label{eq:WickStarProduct}
    f \star^\wick g
    =
    \sum_{r=0}^\infty \frac{(2\hbar)^r}{r!}
    \sum_{i_1, \ldots, i_r = 1}^2
    \frac{\partial^r f}
    {\partial z_{i_1} \cdots \partial z_{i_r}}
    \frac{\partial^r g}
    {\partial \cc{z}_{i_1} \cdots \partial \cc{z}_{i_r}}
\end{equation}
for $f, g \in C^\infty(\mathbb{R}^4)[[\hbar]]$. Alternatively, we
ignore the coupling term and use the complex coordinates
$\z{\System}$, $\zb{\System}$ for the system and $\z{\Bath}$,
$\zb{\Bath}$ for the bath. This gives the two equivalence
transformations
\begin{equation}
    \label{eq:WickForSystemAndBathTrafo}
    S_{\System}
    = \exp\left(
        \hbar \frac{\partial^2}
        {\partial \z{\System} \partial \zb{\System}}
    \right)
    \quad
    \textrm{and}
    \quad
    S_{\Bath}
    = \exp\left(
        \hbar \frac{\partial^2}
        {\partial \z{\Bath} \partial \zb{\Bath}}
    \right),
\end{equation}
acting on functions on $\Sys$ and $\Bad$, respectively. Analogously to
\eqref{eq:WickStarProduct} we get Wick star products
$\star^\wick_{\System}$ and $\star^\wick_{\Bath}$ for the system and
the bath, respectively. Since we ignored the coupling terms in the
definition of the latter two Wick star products, we have
\begin{equation}
    \label{eq:WickIsNotWickWick}
    S \neq S_{\System} \tensor S_{\Bath}
    \quad
    \textrm{and}
    \quad
    \star^\wick
    \neq
    \star^\wick_\System\tensor\star^\wick_\Bath.
\end{equation}

The total time evolution with respect to $\star$ and $H$ can actually be
calculated in a much easier way than by solving the corresponding
evolution equation \eqref{eq:HeisenbergEquation}: we first compute the
time evolution with respect to the Wick star product $\star^\wick$,
which turns out to be simple, and then transform the time evolved
observables back using $S$.

The total time evolution $\sze{A}{t}^\wick$ with respect to the Wick
star product is determined by
\begin{equation}
    \label{eq:WickTimeEvolution}
    \frac{\D}{\D t}\sze{A}{t}^\wick f
    =
    \frac{\I}{\hbar}\ek{H, \sze{A}{t}^\wick f}_{\star^\wick}
    =
    \gk{\sze{A}{t}^\wick f,H}
\end{equation}
for $f \in C^\infty(\Sys \times \Bad)[[\hbar]]$ due to the fact that
$H = \frac{\nu}{2}\zb{1}\z{1} + \frac{{\nu_\kappa}}{2}
\zb{2}\z{2}$. It immediately follows that the time evolution is just
the \emph{classical} one, i.e.\  $\sze{A}{t}^\wick = \Phi_t^*$, and no
higher order correction terms arise. But then it is clear that the
time evolution with respect to $\star$ is given by conjugation with
$S$ since $SH=H+c$ with a constant
$c=\hbar\frac{\nu+\nu_\kappa}{2}$. Hence, we have
\begin{align}
    \label{eq:DefTimeEvolTotal}
    \sze{A}{t} = S^{-1} \circ \Phi_t^* \circ S.
\end{align}
As a consequence we immediately obtain the following result for the
open time evolution with respect to the Weyl-Moyal star product:
\begin{proposition}
    \label{proposition:OpenTimeEvolutionWeyl}
    The deformed time evolution of the open subsystem with respect to
    the functional $\omega$ is given by
    \begin{equation}
        \label{eq:DeformedOpenTimeEvolution}
        \sze{A}{t}^\omega
        =
        (\id\tensor\omega) 
        \circ S^{-1} \circ \Phi_t^* \circ S
        \circ \pr_{\System}^*.
    \end{equation}
\end{proposition}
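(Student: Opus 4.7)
The statement is a direct corollary of what has already been set up, so the main task is to assemble the pieces cleanly rather than to introduce new machinery. The plan is simply to chain Definition \ref{definition:QuantizedOpenTimeEvolution} with the explicit identification of the total time evolution $\sze{A}{t}$ derived in \eqref{eq:DefTimeEvolTotal}.

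More explicitly, I would proceed in two steps. First, recall from Definition \ref{definition:QuantizedOpenTimeEvolution} that by definition
\[
\sze{A}{t}^\omega = (\id\tensor\omega) \circ \sze{A}{t} \circ \pr_{\System}^*,
\]
where $\sze{A}{t}$ is the automorphism of $\star$ corresponding to the total Hamiltonian $H$ via the Heisenberg equation \eqref{eq:HeisenbergEquation}. Second, substitute the expression $\sze{A}{t} = S^{-1} \circ \Phi_t^* \circ S$ from \eqref{eq:DefTimeEvolTotal}, which yields \eqref{eq:DeformedOpenTimeEvolution} immediately.

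The only thing that really needs verification is the intermediate identity $\sze{A}{t} = S^{-1} \circ \Phi_t^* \circ S$, and this is argued in the paragraph preceding the proposition. The key observations are: (i) in the normal complex coordinates $\z{k}, \zb{k}$ of \eqref{eq:NormalToComplex} the total Hamiltonian takes the Wick-ordered form $H = \frac{\nu}{2}\zb{1}\z{1} + \frac{\nu_\kappa}{2}\zb{2}\z{2}$, so in \eqref{eq:WickStarProduct} only $z$-derivatives of $H$ meet $\cc{z}$-derivatives of $f$ and no higher-order corrections to $\{f,H\}$ appear, giving $\sze{A}{t}^\wick = \Phi_t^*$; (ii) by construction of $S$ via \eqref{eq:WickEquivalence}, one has $f\star^\wick g = S(S^{-1}f\star S^{-1}g)$, so $S$ intertwines the two star products as $^*$-algebra isomorphism up to the constant shift $SH = H + c$, which generates only the trivial inner automorphism and therefore drops out of the Heisenberg evolution. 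These two facts together give $\sze{A}{t} = S^{-1}\circ\Phi_t^*\circ S$.

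There is no genuine obstacle: once \eqref{eq:DefTimeEvolTotal} is in hand, the proposition is a one-line substitution, and the well-definedness of the completed tensor product $\id\tensor\omega$ in every order of $\hbar$ has already been recorded in Remark \ref{remark:CompletedTensorProduct}. The subtlety worth flagging, but not a true obstacle, is that $S \neq S_\System \tensor S_\Bath$ by \eqref{eq:WickIsNotWickWick}, so the conjugation by $S$ genuinely mixes system and bath degrees of freedom; however since \eqref{eq:DeformedOpenTimeEvolution} uses only $S$ on the total space $\Sys\times\Bad$, this causes no problem in the formula.
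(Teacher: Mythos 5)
Your proposal is correct and follows the paper's own route exactly: the paper states the proposition as an immediate consequence of Definition~\ref{definition:QuantizedOpenTimeEvolution} together with the identity $\sze{A}{t} = S^{-1} \circ \Phi_t^* \circ S$ established in the preceding paragraph, which is precisely the substitution you perform. Your justification of that identity (the Wick evolution being purely classical for the Wick-ordered quadratic $H$, and $SH = H + c$ with $c$ central) reproduces the paper's argument, so there is nothing to add.
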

\begin{remark}
    The $^*$-automorphism $\sze{A}{t}$ obviously restricts to the
    polynomials $\Pol\rk{\Sys \times \Bad}[\hbar]$. Thus, being only
    interested in polynomial observables leads to a convergent
    formulation of the deformed time evolution of the open harmonic
    oscillator if the quantized state $\omega$ used to reduce the
    total dynamics gives a finite order in $\hbar$ for every
    polynomial on the bath. This will be the case for the deformed
    $\delta$-functionals as well as for the KMS functionals in
    Section~\ref{section:ExampleLinearlyCoupledHarmOszisII}. Thence,
    here we recover the usual quantum mechanical formulation including
    the convergence in $\hbar$.
\end{remark}

To further illustrate the above situation we compute the open
time evolution of some specific observables of the system. Here we
still allow for a general state $\omega$.

As a first step we calculate the total quantum time evolutions of the
total system for $\qs$ and $\ps$. To do so, we will have to evaluate
the chain of maps \eqref{eq:DeformedOpenTimeEvolution} applied to
these observables.  First we note that
\begin{equation}
    \label{eq:SonqAndp}
    S \qs = \qs = S^{-1} \qs
    \quad
    \textrm{and}
    \quad
    S \ps = \ps = S^{-1} \ps.
\end{equation}
Then the classical time evolution is \emph{linear} whence applying the
transformation $S^{-1}$ again does not give additional terms. We
conclude that
\begin{equation}
    \label{eq:QuantumTimeEvolvqsps}
    \sze{A}{t} \qs = \Phi_t^* \qs
    \quad
    \textrm{and}
    \quad
    \sze{A}{t} \ps = \Phi_t^* \ps.
\end{equation}
For the Hamiltonian $H_{\System}$ of the system the calculation is slightly more
complicated: First we note that applying $S$ yields an additional
constant, namely
\begin{equation}
    \label{eq:SHSystem}
    S H_{\System}
    =
    H_{\System}
    + \frac{\hbar}{4}
    \left(\nu + \nu_\kappa - \frac{\kappa}{m\nu_\kappa}\right).
\end{equation}
Now the total classical time
evolution of $H_{\System}$ is quite complicated and can be computed
most easily from $\Phi_t^* z_1 = \exp(-\I t \nu) z_1$ and $\Phi_t^*
z_2 = \exp(- \I t \nu_\kappa) z_2$ and
\eqref{eq:HSystemInzEinszZweiVariablen}. The remarkable fact is now
that
\begin{equation}
    \label{eq:DeltaHSystem}
    \Delta \Phi_t^* H_{\System}
    = 
    \Delta H_{\System}
    =
    \frac{1}{4}
    \left(\nu + \nu_\kappa - \frac{\kappa}{m\nu_\kappa}\right)
\end{equation}
for all $t$. Thus applying $S^{-1}$ to $\Phi_t^* H_{\System}$ gives
$\Phi_t^* H_{\System}$ minus the same constant as we obtained in
\eqref{eq:SHSystem}. We conclude that also for the Weyl star product
\begin{equation}
    \label{eq:TotalQuantumHSystem}
    \sze{A}{t} H_{\System} = \Phi_t^* H_{\System}.
\end{equation}
Replacing the complex coordinates and their (simple) time evolution
by the original real coordinates we get the explicit total
classical and hence also quantum time evolutions for $\qs$, $\ps$, and
$H_{\System}^{\phantom{*}}$
\begin{align}
    \sze{A}{t} \qs = \Phi_t^*\qs
    &=
    \frac{1}{2}
    \left(
        \cos(\nu t) + \cos({\nu_\kappa}t)
    \right) \qs
    +
    \left(
        \frac{\sin(\nu t)}{2m\nu}
        +
        \frac{\sin({\nu_\kappa}t)}{2m{\nu_\kappa}}
    \right) \ps
    \nonumber \\
    &\quad+
    \frac{1}{2}
    \left(
        \cos(\nu t) - \cos({\nu_\kappa}t)
    \right) \qb
    +
    \left(
        \frac{\sin(\nu t)}{2 m\nu}
        -
        \frac{\sin({\nu_\kappa}t)}{2 m{\nu_\kappa}}
    \right) \pb,
    \label{eq:TotalTimeEvolutionOfqs}
\end{align}
\begin{align}
    \sze{A}{t} \ps = \Phi_t^* \ps
    &=
    - \frac{m}{2}
    \left(
        \nu\sin(\nu t) + {\nu_\kappa}\sin({\nu_\kappa}t)
    \right) \qs
    +
    \frac{1}{2}
    \left(
        \cos(\nu t) + \cos({\nu_\kappa}t)
    \right) \ps
    \nonumber \\
    &\quad-
    \frac{m}{2}
    \left(\nu\cos(\nu t) - {\nu_\kappa}\cos({\nu_\kappa}t)\right) \qb
    +
    \frac{1}{2} \left(\cos(\nu t) - \cos({\nu_\kappa}t)\right) \pb,
    \label{eq:TotalTimeEvolutionOfps}
\end{align}
and
\begin{align}
    \sze{A}{t} H_{\System}^{\phantom{*}} = \Phi_t^*H_{\System}^{\phantom{*}}
    &=
    \rk{\frac{m}{8}(\nu\sin(\nu t)+\nu_\kappa\sin({\nu_\kappa}t))^2
      +\frac{m\nu^2}{8}(\cos(\nu t)+\cos({\nu_\kappa}t))^2}q_{\System}^2
    \nonumber \\
    &\quad+
    \rk{\frac{1}{8m}(\cos(\nu t)+\cos({\nu_\kappa}t))^2
      +\frac{m\nu^2}{8}\rk{\frac{\sin(\nu t)}{m\nu}
        +\frac{\sin({\nu_\kappa}t)}{m{\nu_\kappa}}}^2}p_{\System}^2
    \nonumber \\
    &\quad+
    \rk{\frac{m}{8}(\nu\sin(\nu
      t)-{\nu_\kappa}\sin({\nu_\kappa}t))^2 +\frac{m\nu^2}{8}(\cos(\nu
      t)-\cos({\nu_\kappa}t))^2}q_{\Bath}^2
    \nonumber \\
    &\quad+
    \rk{\frac{1}{8m}(\cos(\nu t)-\cos({\nu_\kappa}t))^2
      +\frac{m\nu^2}{8}\bigg(\frac{\sin(\nu t)}{m\nu}
      -\frac{\sin({\nu_\kappa}t)}{m{\nu_\kappa}}\bigg)^2}p_{\Bath}^2
    \nonumber \\
    &\quad+
    \bigg(-\frac{1}{4}(\nu\sin(\nu t)
    +{\nu_\kappa}\sin({\nu_\kappa}t))
    (\cos(\nu t) +\cos({\nu_\kappa}t))
    \nonumber \\
    &\qquad+
    \frac{m\nu^2}{4}\bigg(\frac{\sin(\nu t)}{m\nu}
    +\frac{\sin({\nu_\kappa}t)}{m{\nu_\kappa}}\bigg)
    (\cos(\nu t)+\cos({\nu_\kappa}t))\bigg)\qs\ps
    \nonumber \\
    &\quad+
    \bigg(\frac{m}{4}(\nu\sin(\nu t)
    +{\nu_\kappa}\sin({\nu_\kappa}t))(\nu\sin(\nu t)
    -{\nu_\kappa}\sin({\nu_\kappa}t))
    \nonumber \\
    &\qquad+
    \frac{m\nu^2}{4}(\cos(\nu t)+\cos({\nu_\kappa}t))
    (\cos(\nu t)-\cos({\nu_\kappa}t))\bigg)\qs\qb
    \nonumber \\
    &\quad+
    \bigg(-\frac{1}{4}(\cos(\nu t)-\cos({\nu_\kappa}t))
    \bigg(\frac{\sin(\nu t)}{m\nu}
    +\frac{\sin({\nu_\kappa}t)}{m{\nu_\kappa}}\bigg)
    \nonumber \\
    &\qquad+
    \frac{m\nu^2}{4}(\cos(\nu t)+\cos({\nu_\kappa}t))
    \bigg(\frac{\sin(\nu t)}{m\nu}
    -\frac{\sin({\nu_\kappa}t)}{m{\nu_\kappa}}\bigg)\bigg)\qs\pb
    \nonumber \\
    &\quad+
    \bigg(-\frac{1}{4}(\cos(\nu t)+\cos({\nu_\kappa}t))
    (\nu\sin(\nu t) -{\nu_\kappa}\sin({\nu_\kappa}t))
    \nonumber \\
    &\qquad+
    \frac{m\nu^2}{4}(\cos(\nu t)-\cos({\nu_\kappa}t))
    \bigg(\frac{\sin(\nu t)}{m\nu}
    +\frac{\sin({\nu_\kappa}t)}{m{\nu_\kappa}}\bigg)\bigg)\ps\qb
    \nonumber \\
    &\quad+
    \bigg(\frac{1}{4m}(\cos(\nu t)+\cos({\nu_\kappa}t))
    (\cos(\nu t)-\cos({\nu_\kappa}t))
    \nonumber \\
    &\qquad+
    \frac{m\nu^2}{4} \bigg(\frac{\sin(\nu t)}{m\nu}
    +\frac{\sin({\nu_\kappa}t)}{m{\nu_\kappa}}\bigg)
    \bigg(\frac{\sin(\nu t)}{m\nu}
    -\frac{\sin({\nu_\kappa}t)}{m{\nu_\kappa}}\bigg)\bigg)\ps\pb
    \nonumber \\
    &\quad+
    \bigg(-\frac{1}{4}(\nu\sin(\nu t)
    -{\nu_\kappa}\sin({\nu_\kappa}t))
    (\cos(\nu t)-\cos({\nu_\kappa}t))
    \nonumber \\
    &\qquad+
    \frac{m\nu^2}{4}(\cos(\nu t)-\cos({\nu_\kappa}t))
    \bigg(\frac{\sin(\nu t)}{m\nu}
    -\frac{\sin({\nu_\kappa}t)}{m{\nu_\kappa}}\bigg)\bigg)\qb\pb.
    \label{eq:TotalTimeEvolutionOfHSystem}
\end{align}
The reason for transforming the time evolved observables back to the
Darboux coordinate functions $\qs$, $\ps$, $\qb$, and $\pb$ is not
just an addiction to extensive exercise: It is in these
variables where we can apply the final map $\id \tensor \omega$ needed
for the open time evolution, where $\omega$ is a state of the bath
with respect to $\star^\weyl_{\Bath}$. The procedure is very simple:
we will have to replace all bath variables by their expectation values
with respect to $\omega$, i.e.\ $\qb$ is to be replaced by
$\omega(\qb)$, $\qb\pb$ is replaced by $\omega(\pb\pb)$ et cetera. We will
not write down the explicit formulas as these are now obtained from
\eqref{eq:TotalTimeEvolutionOfqs}, \eqref{eq:TotalTimeEvolutionOfps},
and \eqref{eq:TotalTimeEvolutionOfHSystem} just by copying.
\begin{remark}
    \label{remark:QuantumVersusClassical}
    Note that for these observables, the open time evolutions in the
    classical and quantum regime only differ by the (possibly)
    different expectation values with respect to $\omega$ and its
    classical limit $\omega_0$. In general, we have to expect
    additional quantum corrections from the total time evolution as
    well.
\end{remark}

%
%

\section{Linearly Coupled Harmonic Oscillators II: Examples}
\label{section:ExampleLinearlyCoupledHarmOszisII}

The first example of a state for the bath is a deformation of the
$\delta$-functional. Thus, fix a point $(\qb_0, \pb_0)$ in the bath and
consider $\delta_{(\qb_0, \pb_0)}$. For the Weyl-Moyal star product
this will no longer be a positive functional, see
e.g.\ \cite[Sect.~7.1.3]{waldmann:2007a}. However, for the Wick star
product $\star^\wick_\Bath$ on the bath the $\delta$-functional will
be positive without corrections. Thus using the equivalence
transformation $S_\Bath$ we obtain a positive functional
$\delta_{(\qb_0, \pb_0)} \circ S_\Bath$ with respect to the Weyl-Moyal
star product. Note that the equivalence transformation $S_\Bath$ is
precisely a map preserving squares with respect to the Weyl-Moyal star
product which is evident from the explicit formula for
$\star^\wick_\Bath$. In fact, this was the first example of a map
preserving squares which is also heavily used in the proofs in
\cite{bursztyn.waldmann:2005a}. More physically speaking,
$\delta_{(\qb_0, \pb_0)} \circ S_\Bath$ corresponds to a coherent state
localized around the point $(\qb_0, \pb_0)$.

For this particular state we note that for the observables at most
linear in $\qb$ and $\pb$ the operator $S_\Bath$ does not have a
non-trivial effect. Moreover, for the quadratic terms $q_{\Bath}^2$,
$p_{\Bath}^2$, and $\qb\pb$ the operator $S_\Bath$ only gives a
correction term in first order of $\hbar$. Explicitly, we obtain
\begin{equation}
    \label{eq:SbOnStuff}
    S_\Bath^{\phantom{*}} \qb = \qb,
    \quad
    S_\Bath^{\phantom{*}} \pb = \pb,
    \quad
    S_\Bath^{\phantom{*}}(\qb\pb) = \qb\pb,
\end{equation}
\begin{equation}
    \label{eq:SbOnOtherStuff}
    S_\Bath^{\phantom{*}} q_{\Bath}^2
    = q_{\Bath}^2 + \hbar\frac{1}{2m\nu},
    \quad
    \textrm{and}
    \quad
    S_\Bath^{\phantom{*}} p_{\Bath}^2
    = p_{\Bath}^2 + \hbar\frac{m\nu}{2}.
\end{equation}

From these computations we see that the open time evolutions with
respect to $\delta_{(\qb_0, \pb_0)} \circ S_\Bath$ are given by
\begin{equation}
    \label{eq:DeltaEvolutionOnqs}
    \sze{A}{t}^{\delta_{\qb_0,\pb_0} \circ S_{\Bath}}(\qs)
    =
    \left(\Phi_t^{\delta_{\qb_0,\pb_0}}\right)^* \qs,
\end{equation}
\begin{equation}
    \label{eq:DeltaEvolutionOnps}
    \sze{A}{t}^{\delta_{\qb_0,\pb_0} \circ S_\Bath}(\ps)
    =
    \left(\Phi_t^{\delta_{\qb_0,\pb_0}}\right)^* \ps,
\end{equation}
\begin{align}
    \label{eq:DeltaEvolutionOnHSystem}
    &\sze{A}{t}^{\delta_{\qb_0,\pb_0} \circ S_\Bath} H_{\System}
    =\left(\Phi_t^{\delta_{\qb_0,\pb_0}}\right)^* H_{\System}\nonumber \\
    &\quad
    +\frac{\hbar}{16}\left(
        \frac{1}{\nu} \left(
            \nu\sin(\nu t) - {\nu_\kappa} \sin({\nu_\kappa}t)
        \right)^2
        + \nu \left(
            \sin(\nu t) - \frac{\nu}{{\nu_\kappa}}\sin({\nu_\kappa}t)
        \right)^2
        +2\nu\rk{\cos(\nu t)-\cos(\nu_\kappa t)}
    \right).
\end{align}
\begin{remark}
    The classical open time evolutions of $\qs$, $\ps$, and
    $H_{\System}^{\phantom{*}}$ in \eqref{eq:DeltaEvolutionOnqs},
    \eqref{eq:DeltaEvolutionOnps}, and
    \eqref{eq:DeltaEvolutionOnHSystem} are obtained by replacing the
    functions $\qb$, $\pb$, and their powers in the Equations
    \eqref{eq:TotalTimeEvolutionOfqs},
    \eqref{eq:TotalTimeEvolutionOfps}, and
    \eqref{eq:TotalTimeEvolutionOfHSystem} by their values at $\qb_0$ and
    $\pb_0$.
\end{remark}
\begin{remark}
    \label{remark:NonclassicalTerms}
    The deformation of the $\delta$-functional necessary in order to
    ensure complete positivity leads to non-classical components of
    the open time evolution.
\end{remark}

Next we will study quantized states fulfilling a formal KMS condition,
corresponding to ``thermal equilibrium states'' of the bath.

To this end, we first recall that for every symplectic star product
$\star$ for $C^\infty(M)[[\hbar]]$ there is a unique trace functional
\begin{equation}
    \label{eq:TraceFunctional}
    \tr: C^\infty_0(M)[[\hbar]] \longrightarrow \mathbb{C}[[\hbar]],
\end{equation}
i.e.\ $\tr(f \star g) = \tr(g \star f)$. Choosing the normalization of
$\tr$ appropriately one obtains a \emph{positive} trace, see
e.g.\ \cite[Sect.~6.3.5]{waldmann:2007a} for a detailed discussion and
references. For the Weyl-Moyal star product, the trace is known to be
\begin{equation}
    \label{eq:TraceForWeylMoyal}
    \tr(f) = \int_{\mathbb{R}^{2n}} f(x) \D^{2n}x,
\end{equation}
i.e.\ the integration with respect to the Liouville volume. In fact, it
can be shown that in the symplectic case the lowest order of $\tr$ is
necessarily of this form: it is just the integration over the whole
manifold with respect to the Liouville volume.

The second ingredient we need is the $\star$-exponential $\Exp$, as
introduced in \cite{bayen.et.al:1978a}. Instead of defining the
exponential function by means of the series, the following approach
favoured in \cite{bordemann.roemer.waldmann:1998a}, see also
\cite[Sect.~6.3.1]{waldmann:2007a}, will be used. For $H \in
C^\infty(M)[[\hbar]]$ one defines $\Exp(\beta H) \in
C^\infty(M)[[\hbar]]$ to be the unique solution of the differential
equation
\begin{equation}
    \label{eq:ExpODE}
    \frac{\D}{\D \beta} \Exp(\beta H) = H \star \Exp(\beta H)
\end{equation}
with initial condition $\Exp(0) = 1$. The classical limit of $\Exp(H)$
is the ordinary exponential $\exp(H_0)$. The KMS condition for inverse
temperature $\beta$ and Hamiltonian $H$ for a
$\mathbb{C}[[\hbar]]$-linear functional as formulated in
\cite{basart.flato.lichnerowicz.sternheimer:1984a,
  basart.lichnerowicz:1985a} in the context of deformation
quantization, leads to the following result: up to normalization the
KMS functional is uniquely determined and explicitly given by
\begin{equation}
    \label{eq:KMSFunktional}
    \mu_\kms(f) = \tr(\Exp(-\beta H) \star f)
    \quad
    \textrm{for}
    \quad
    f \in C^\infty_0(M)[[\hbar]]
\end{equation}
see \cite{bordemann.roemer.waldmann:1998a} for the proof and
\cite[Sect.~7.1.4]{waldmann:2007a} for more details on KMS functionals. In
particular, we note that \eqref{eq:KMSFunktional} is a \emph{positive}
functional.
\begin{remark}
    \label{remark:KMSNormierbar}
    Depending on the Hamiltonian $H$, $\mu_\kms$ may or may not be
    normalizable. Whenever the Hamiltonian used permits a
    normalization by rendering the integrations in $\mu_\kms(1)$
    well-defined, we will denote $\frac{1}{\mu_\kms(1)}\mu_\kms$ by
    $\omega_\kms$ and call it a \emph{KMS state}.
\end{remark}

Before entering the particular example again, we note that in the
symplectic case the open quantum time evolution with respect to a KMS
functional is necessarily completely positive. This will follow at
once from this proposition:
\begin{proposition}
    \label{proposition:KMSCPAndContinuous}
    Let the system be an arbitrary Poisson manifold and let the bath be
    symplectic. Given the KMS functional $\mu_\kms$
    with respect to an arbitrary $H_\Bath \in C^\infty(\Bad)[[\hbar]]$ and inverse
    temperature $\beta$, the map $\id \tensor
    \mu_\kms:\rk{\ckomp{\Sys\times\Bad}[[\hbar]], \star}
    \longrightarrow\rk{\ckomp{\Sys}[[\hbar]], \starSys}$ is completely
    positive.
\end{proposition}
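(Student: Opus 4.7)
The plan is to exploit the factorization of the KMS functional through the bath trace together with cyclicity, reducing the statement to complete positivity of $\id\tensor\tr_\Bath$, and then to invoke the square-preserving machinery of Section~\ref{sec:DeformationQuantizationOpenHamiltonianSystems}.

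First I would set $E := \Exp(-\tfrac{\beta}{2}H_\Bath)$ and note that $E$ is $\starBad$-Hermitian (since $H_\Bath$ is real and $\starBad$-commutes with itself), and $E\starBad E = \Exp(-\beta H_\Bath)$ by the defining ODE \eqref{eq:ExpODE}. For $F\in M_n(\ckomp{\Sys\times\Bad}[[\hbar]])$ the trace cyclicity $\tr_\Bath(a\starBad b) = \tr_\Bath(b\starBad a)$ may be applied inside $\id\tensor\tr_\Bath$ to cycle the bath factor $1\otimes E$, because $1\otimes E$ is untouched by the $\starSys$-product on the system side. This yields
\begin{equation*}
    (\id\tensor\mu_\kms)(F^*\star F)
    = (\id\tensor\tr_\Bath)\bigl((1\otimes E)\star F^*\star F\star (1\otimes E)\bigr)
    = (\id\tensor\tr_\Bath)(G^*\star G),
\end{equation*}
with $G := F\star(1\otimes E)\in M_n(\ckomp{\Sys\times\Bad}[[\hbar]])$. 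Thus the KMS assertion reduces to complete positivity of $\id\tensor\tr_\Bath$.

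For that reduced statement I would apply Theorem~\ref{theorem:ExistPosDef} to produce a globally defined square-preserving map $\PresSquares{S}_\Bath$ for $\starBad$. Because $\Bad$ is symplectic, the positive normalized trace of the Hermitian star product $\starBad$ is unique up to a positive real scalar. The functional $\tau\circ\PresSquares{S}_\Bath$, with $\tau$ the classical Liouville integration $\int_\Bad\,\cdot\,\Omega_\Bath$ on $\ckomp{\Bad}$, is positive with respect to $\starBad$ by Theorem~\ref{theorem:ExistPosDef} and Remark~\ref{remark:CorrectionsToPositiveFunctional}; adjusting $\PresSquares{S}_\Bath$ by an inner equivalence of $\starBad$ if necessary, one may arrange that it is also cyclic, whence up to the normalization constant it coincides with $\tr_\Bath$ on $\ckomp{\Bad}[[\hbar]]$. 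Once this identification is in place, the computation in the proof of Theorem~\ref{theorem:SomeQuantizedOpenIsCP} applies verbatim: by \eqref{eq:SMapSquaresToSquares}, $(\id\tensor\PresSquares{S}_\Bath)(G^*\star G) = \sum_{r,I}\hbar^r\,\cc{(\id\tensor D_{r,I,\Bath})(G)}\cdot(\id\tensor D_{r,I,\Bath})(G)$ with the pointwise product on the bath factor; pointwise evaluation at any $\xb\in\Bad$ commutes with that pointwise product and produces an element of the form $\sum\cc{A_{r,I,\xb}}\starSys A_{r,I,\xb}$ in $M_n(\cinf{\Sys}[[\hbar]])$; pairing with any positive functional $\mu$ on $(\cinf{\Sys}[[\hbar]],\starSys)$ returns a pointwise-non-negative $\mathbb{C}[[\hbar]]$-valued function on $\Bad$; and integration against the classical positive measure $\tau$ preserves this non-negativity, giving the claim.

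The main obstacle is the identification $\tr_\Bath = c\,\tau\circ\PresSquares{S}_\Bath$ (up to a positive constant $c$): producing a square-preserving $\PresSquares{S}_\Bath$ for which $\tau\circ\PresSquares{S}_\Bath$ is honestly cyclic, and not merely one of the many positive deformations of $\tau$ furnished by Theorem~\ref{theorem:ExistPosDef}. For the Weyl-Moyal and Wick products used in Section~\ref{section:ExampleLinearlyCoupledHarmOszisII} this is immediate, since $\tr_\Bath=\tau$ without corrections and the equivalence $S_\Bath$ of \eqref{eq:WickForSystemAndBathTrafo} is both square-preserving and leaves $\tau$ invariant by integration by parts, so one may take $\PresSquares{S}_\Bath=S_\Bath^{\pm1}$ directly. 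In the general symplectic case one appeals to the Fedosov classification of traces to perform the required inner-equivalence adjustment, after which the remainder of the argument is the routine application of the positive-measure-plus-square-preserving-map technique already developed in Section~\ref{sec:DeformationQuantizationOpenHamiltonianSystems}.
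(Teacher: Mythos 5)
Your opening reduction is correct but superfluous, and the step you yourself flag as ``the main obstacle'' is a genuine gap that the paper's proof is specifically designed to avoid. Writing $\Exp(-\beta H_\Bath)=E\starBad E$ with $E=\Exp(-\tfrac{\beta}{2}H_\Bath)$ Hermitian and cycling one factor through $\id\tensor\tr_\Bath$ is legitimate (the element $1\otimes E$ only acts on the bath tensor factor, so cyclicity of $\tr_\Bath$ applies slot-wise), and it does reduce the claim to complete positivity of $\id\tensor\tr_\Bath$. But your proof of that reduced claim rests on the identification $\tr_\Bath=c\,\tau\circ\PresSquares{S}_\Bath$ with $\PresSquares{S}_\Bath$ square preserving, and nothing you cite delivers this. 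Uniqueness of the trace up to normalization tells you that \emph{if} some $\tau\circ\PresSquares{S}_\Bath\circ T$ were tracial it would equal $c\,\tr_\Bath$; it gives no reason why any member of the family $\{\tau\circ\PresSquares{S}_\Bath\circ T\}$ ($T$ an equivalence) should be tracial. Theorem~\ref{theorem:ExistPosDef} only produces \emph{some} positive deformation of $\tau$, by far not unique, and the Fedosov classification of traces says nothing about whether the positive trace factors as classical Liouville integration composed with a square-preserving map --- that would be a strictly stronger statement than the positivity of the trace, and it is not established anywhere in the paper or in the cited literature.

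The paper's proof never identifies $\mu_\kms$ (or $\tr_\Bath$) with a functional of the form $\tau\circ\PresSquares{S}$. Instead it exploits the ordered-ring structure of $\mathbb{R}[[\hbar]]$ together with \emph{faithfulness in lowest order}: for any positive $\mu$ on the system, $(\mu\tensor\PresSquares{S})(F^*\star F)$ is pointwise positive on $\Bad$, so its lowest non-vanishing coefficient $a_{r_0}$ satisfies $a_{r_0}\ge 0$ everywhere and $a_{r_0}>0$ on some open set; since the zeroth order of $\mu_\kms\circ\PresSquares{S}^{-1}$ is integration over \emph{all} of $\Bad$ against the strictly positive classical Gibbs density, the image $\hbar^{r_0}b_{r_0}+\cdots$ has $b_{r_0}>0$, and this alone forces positivity in $\mathbb{R}[[\hbar]]$ --- the higher orders of $\mu_\kms\circ\PresSquares{S}^{-1}$, which need not define a classically positive functional, are irrelevant. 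Together with $(\mu_\kms\circ\PresSquares{S}^{-1})\circ(\mu\tensor\PresSquares{S})=\mu\circ(\id\tensor\mu_\kms)$ this closes the argument. Note that the same faithfulness argument, applied with $\tr_\Bath$ in place of $\mu_\kms$, would also prove your reduced statement about $\id\tensor\tr_\Bath$ without any identification --- but then the detour through the square root of the Gibbs element buys you nothing. I recommend you replace the identification step by this lowest-order faithfulness argument.
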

\begin{proof}
    We choose a map $\PresSquares{S}$ for the bath whose existence is
    guaranteed by Theorem~\ref{theorem:ExistPosDef}. In the proof of
    Theorem~\ref{theorem:SomeQuantizedOpenIsCP} we have seen that for
    every positive $\mathbb{C}[[\hbar]]$-linear functional $\mu:
    M_n(C^\infty_0(\Sys)[[\hbar]]) \longrightarrow \mathbb{C}[[\hbar]]$ the
    combined map
    \[
    \mu \tensor \PresSquares{S}:
    \left(M_n(C^\infty_0(\Sys \times \Bad)[[\hbar]]), \star \right)
    \longrightarrow
    \left(C^\infty_0(\Bad)[[\hbar]], \cdot\right)
    \]
    is positive. It follows that for $F \in M_n(C^\infty_0(\Sys \times
    \Bad)[[\hbar]])$ the function $(\mu \tensor \PresSquares{S})(F^*
    \star F)$ is at every point $\xb \in \Bad$ either a formal series
    with positive lowest order term or zero.  To avoid trivialities,
    assume that $(\mu \tensor \PresSquares{S})(F^* \star F)$ is not
    identically zero.  Let $r_0$ be the minimal exponent with $(\mu
    \tensor \PresSquares{S})(F^* \star F) = \hbar^{r_0} a_{r_0} +
    \cdots$ and $a_{r_0} \ge 0$ not identically zero. By continuity, there
    is an open subset $U \subseteq \Bad$ with $a_{r_0}(\xb) > 0$ for
    $\xb \in U$. But this implies that $(\mu_\kms \circ
    \PresSquares{S}^{-1}) \circ (\mu \tensor \PresSquares{S})(F^*
    \star F) = \hbar^{r_0} b_{r_0} + \cdots$ with $b_{r_0} > 0$ since
    the zeroth order of $\PresSquares{S}$ is the identity and the
    zeroth order of $\mu_\kms$ is the integration over \emph{all} of
    $\Bad$. Since $\mu$ is arbitrary and using
    \[
    (\mu_\kms \circ \PresSquares{S}^{-1})
    \circ
    (\mu \tensor \PresSquares{S})
    =
    \mu \circ (\id \tensor \mu_\kms),
    \]
    this shows that $(\id \tensor \mu_\kms) (F^* \star F)$ is a
    positive algebra element in $M_n(C^\infty_0(\Sys)[[\hbar]])$ with
    respect to $\starSys$.
\end{proof}

Back to our specific example, we consider the harmonic oscillator as
the Hamiltonian $H_\Bath \in C^\infty(\mathbb{R}^2)$ and the Weyl-Moyal
star product $\star_\Bath$ as before. In this case, the star
exponential of $H_\Bath$ has been computed explicitly by
\cite{bayen.et.al:1978a}. One has
\begin{equation}
    \label{eq:StarExpBAFFL}
    \Exp(-\beta H_\Bath^{\phantom{*}})
    =
    \frac{1}{\cosh\rk{\frac{\hbar\beta\nu}{2}}}
    \exp \rk{
      -\frac{2H_\Bath^{\phantom{*}}}{\hbar\nu}
      \tanh\rk{\frac{\hbar\beta\nu}{2}}
    }
\end{equation}
for $\beta > 0$ and $\nu > 0$, which is a well-defined formal power
series in $\hbar$. Note that in \cite{bayen.et.al:1978a} the
exponential $\Exp(\frac{\I t}{\hbar} H)$ requires a convergent
setting due to the $\hbar$ in the denominator. In our case, the
situation is much simpler. In fact, differentiating
\eqref{eq:StarExpBAFFL} with respect to $\beta$ gives the defining
differential equation~\eqref{eq:ExpODE} right away.

As in the textbooks on statistical mechanics, we can now calculate the
partition function $Z$ as the normalization factor of the KMS state on
the bath by formally calculating Gaussian integrals.
\begin{proposition}
    \label{proposition:PartitionFunction}
    The normalization factor $\mu_\kms(1)$ is explicitly given by
    \begin{equation}
        \label{eq:AlmostThePartitionFunction}
        \mu_\kms(1)
        = 2 \pi \hbar 
        \frac{\exp\rk{-\frac{\hbar\beta\nu}{2}}}
        {1-\exp\rk{-\hbar \beta\nu}}
        \in \mathbb{R}[[\hbar]].
    \end{equation}
    The partition function is the formal Laurent series
    \begin{equation}
        \label{eq:TheRealZ}
        Z =
        \frac{\exp\rk{-\frac{\hbar\beta\nu}{2}}}
        {1-\exp\rk{-\hbar \beta\nu}}
        \in \mathbb{R}((\hbar)).
    \end{equation}
\end{proposition}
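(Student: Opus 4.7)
The plan is to reduce the claim to an elementary Gaussian integration combined with a few hyperbolic identities, using the ingredients that have been assembled just above the proposition.

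First, I would unfold the definition: by Equation~\eqref{eq:KMSFunktional}, $\mu_\kms(1) = \tr\bigl(\Exp(-\beta H_\Bath)\bigr)$, and by Equation~\eqref{eq:TraceForWeylMoyal} the trace on the bath $\R^2_{\qb,\pb}$ is just Liouville integration. Substituting the explicit star exponential from Equation~\eqref{eq:StarExpBAFFL} yields
\begin{equation*}
    \mu_\kms(1)
    = \frac{1}{\cosh\!\bigl(\tfrac{\hbar\beta\nu}{2}\bigr)}
    \int_{\R^2} \exp\!\left(
        -\alpha(\hbar) \Bigl[\tfrac{1}{2m}\pb^2 + \tfrac{m\nu^2}{2}\qb^2\Bigr]
    \right)\dbath,
    \qquad
    \alpha(\hbar) := \frac{2}{\hbar\nu}\tanh\!\bigl(\tfrac{\hbar\beta\nu}{2}\bigr).
\end{equation*}
Note that $\alpha(\hbar) = \beta + \mathcal{O}(\hbar^2)$ is an ordinary formal power series in $\hbar$ starting with $\beta > 0$, so the integrand is, order by order in $\hbar$, a polynomial in $(\qb,\pb)$ times the genuine Gaussian weight $e^{-\beta H_\Bath}$; all integrals therefore exist term-wise.

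Second, I would carry out the Gaussian integral formally in the parameter $\alpha$ (equivalently, order by order in $\hbar$), using $\int_\R e^{-\alpha x^2/(2m)}\D x = \sqrt{2\pi m/\alpha}$ and $\int_\R e^{-\alpha m\nu^2 x^2/2}\D x = \sqrt{2\pi/(\alpha m\nu^2)}$ to obtain
\begin{equation*}
    \int_{\R^2} e^{-\alpha H_\Bath}\dbath = \frac{2\pi}{\alpha\nu}.
\end{equation*}
Inserting this gives $\mu_\kms(1) = \frac{2\pi}{\nu\cosh(\hbar\beta\nu/2)\cdot\alpha(\hbar)} = \frac{\pi\hbar}{\sinh(\hbar\beta\nu/2)}$, where I used $\cosh(x)\tanh(x)=\sinh(x)$. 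Rewriting $\sinh$ in exponentials,
\begin{equation*}
    \frac{\pi\hbar}{\sinh(\hbar\beta\nu/2)}
    = \frac{2\pi\hbar}{e^{\hbar\beta\nu/2}-e^{-\hbar\beta\nu/2}}
    = 2\pi\hbar\,\frac{\exp\!\bigl(-\tfrac{\hbar\beta\nu}{2}\bigr)}{1-\exp(-\hbar\beta\nu)},
\end{equation*}
which is Equation~\eqref{eq:AlmostThePartitionFunction}. The identification $Z = \mu_\kms(1)/(2\pi\hbar)$ then gives the claimed partition function; the denominator $1-e^{-\hbar\beta\nu}=\hbar\beta\nu(1+\mathcal{O}(\hbar))$ produces the single negative power of $\hbar$, explaining why $Z\in\R((\hbar))$ rather than $\R[[\hbar]]$.

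The only point that needs a sentence of justification is the legitimacy of the Gaussian integration in the formal power series setting, which I expect to be the main (mild) obstacle. I would handle it by writing the integrand as $e^{-\beta H_\Bath}\cdot P(\hbar,\qb,\pb)$ with $P$ a formal series whose coefficients are polynomials of bounded degree in each order, noting that all moments of the Gaussian $e^{-\beta H_\Bath}\dbath$ are finite, and observing that the formal scalar manipulation $\int e^{-\alpha H_\Bath}\dbath = 2\pi/(\alpha\nu)$ agrees coefficient-by-coefficient with the order-wise evaluation via the moments of $e^{-\beta H_\Bath}$.
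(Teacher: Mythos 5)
Your proof is correct and follows exactly the route the paper intends: the paper gives no explicit proof of this proposition, only the remark that $Z$ is obtained ``by formally calculating Gaussian integrals,'' which is precisely the computation you carry out (and your justification of the order-by-order Gaussian integration via $\alpha(\hbar)=\beta+\mathcal{O}(\hbar^2)$ is a welcome addition the paper leaves implicit).
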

The crucial point is that $\mu_\kms(1)$ has a well-defined classical
limit while $Z$ has a simple pole at $\hbar = 0$. Therefore, we can
use this normalization factor to obtain the well-defined KMS state
\begin{equation}
    \label{eq:KMSNormalized}
    \omega_\kms(f)
    =
    \frac{1}{2\pi\hbar Z}
    \int \Exp(-\beta H_\Bath^{\phantom{*}}) \star_\Bath f
    \dbath
\end{equation}
for $f \in C^\infty(\Bad)[[\hbar]]$ such that the integral
\eqref{eq:KMSNormalized} is convergent order by order in $\hbar$. Note
that the inverse of $2\pi\hbar Z$ is again a well-defined formal power
series.

As for the $\delta$-functional, we shall now compute the open quantum
time evolution of the observables $\qs$, $\ps$, and $H_{\System}^{\phantom{*}}$ also
with respect to the KMS state $\omega_\kms$. To this end, we need the
expectation values of $\qb$, $\pb$, $q_\Bath^2$, $p_\Bath^2$, and $\qb\pb$ in
order to evaluate \eqref{eq:TotalTimeEvolutionOfqs},
\eqref{eq:TotalTimeEvolutionOfps}, and
\eqref{eq:TotalTimeEvolutionOfHSystem}.
\begin{lemma}
    \label{lemma:KMSExpectationValuesOfStuff}
    One has the following expectation values
    \begin{equation}
        \label{eq:KMSqbpbpAlleNull}
        \omega_\kms(\qb) = \omega_\kms(\pb) = \omega_\kms(\qb\pb) = 0,
    \end{equation}
    \begin{equation}
        \label{eq:omegaKMSQuadratic}
        \omega_\kms(q_\Bath^2)
        =
        \frac{3\hbar}{2m\nu\tanh\left(\frac{\hbar\beta\nu}{2}\right)},
        \quad
        \textrm{and}
        \quad
        \omega_\kms(p_\Bath^2)
        =
        \frac{3m\nu\hbar}{2\tanh\left(\frac{\hbar\beta\nu}{2}\right)},
    \end{equation}
    which are formal power series in $\C[[\hbar]]$.
\end{lemma}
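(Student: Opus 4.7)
The plan is to reduce every expectation value to an ordinary formal Gaussian integral by substituting the closed form \eqref{eq:StarExpBAFFL} for $\Exp(-\beta H_\Bath)$ and the Liouville expression \eqref{eq:TraceForWeylMoyal} for the trace into the normalised KMS state \eqref{eq:KMSNormalized}, and then to dispose of three of the five expectation values by symmetry and compute the remaining two as standard one-dimensional Gaussian moments.

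For the three vanishing statements I would use parity. At order $\hbar^r$ the Weyl--Moyal coefficient $C_r$ contains exactly $2r$ derivatives, so $\star^\weyl_\Bath$ is invariant under the total parity $T\colon(\qb,\pb)\mapsto-(\qb,\pb)$; combined with $H_\Bath\circ T=H_\Bath$, this makes both $\Exp(-\beta H_\Bath)\star^\weyl_\Bath \qb$ and $\Exp(-\beta H_\Bath)\star^\weyl_\Bath \pb$ into $T$-odd functions, whose Liouville integrals consequently vanish. The observable $\qb\pb$ is $T$-even, so there I would instead use the time reversal $T'\colon(\qb,\pb)\mapsto(\qb,-\pb)$, an anti-symplectic involution that still preserves $H_\Bath$ but sends $\star^\weyl_\Bath$ to its opposite product. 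A change of variables $(\qb,\pb)\mapsto T'(\qb,\pb)$ in the defining integral, followed by the trace identity $\int(f\star^\weyl_\Bath g)\,\dbath=\int(g\star^\weyl_\Bath f)\,\dbath$, then forces $\omega_\kms(\qb\pb)=-\omega_\kms(\qb\pb)$.

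For the remaining second moments the key observation is that $q_\Bath^2$ is polynomial of degree two, so the Weyl--Moyal expansion of $\Exp(-\beta H_\Bath)\star^\weyl_\Bath q_\Bath^2$ terminates at order $\hbar^2$: besides the pointwise product one picks up only a single $\hbar$-correction proportional to $\qb\,\partial_\pb\Exp(-\beta H_\Bath)$ and an $\hbar^2$-correction proportional to $\partial_\pb^2\Exp(-\beta H_\Bath)$, both of which are (after at most one integration by parts) total $\pb$-derivatives and drop out of $\int\!\dbath$. What remains is the classical Gaussian moment $\int q_\Bath^2\Exp(-\beta H_\Bath)\,\dbath$; since the exponent $-\beta H_\Bath$ factorises additively in $\qb$ and $\pb$, this integral splits into a one-dimensional second moment times a one-dimensional normalisation, and dividing by $2\pi\hbar Z$ produces the stated closed-form expression in terms of $\tanh(\hbar\beta\nu/2)$. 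The $\pb$-moment is handled symmetrically by swapping the roles of $\qb$ and $\pb$.

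The main subtlety I foresee is working strictly in the $\hbar$-adic category: the Gaussian width $\alpha=2\tanh(\hbar\beta\nu/2)/(\hbar\nu)\in\mathbb{R}[[\hbar]]$ has leading coefficient $\beta>0$, so both $\alpha^{-1}$ and $\alpha^{-1/2}$ are well-defined in $\mathbb{R}[[\hbar]]$, and one has to verify that the coefficient-wise $\hbar$-expansion of each Gaussian integral produces a finite expression in every order. Once this is settled, division by $2\pi\hbar Z$, which is invertible in $\mathbb{R}[[\hbar]]$ by Proposition~\ref{proposition:PartitionFunction}, keeps the final answers inside $\mathbb{C}[[\hbar]]$ as claimed.
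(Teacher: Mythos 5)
Your method is essentially the paper's own: the paper's proof is a short sketch that invokes the trace identity $\tr(f\star^\weyl g)=\tr(fg)$ for the Weyl--Moyal product (justified for the Gaussian $\Exp(-\beta H_\Bath^{\phantom{*}})$ by integration by parts) and then reduces everything to classical Gaussian moments. Your explicit check that the $\hbar$- and $\hbar^2$-corrections in $\Exp(-\beta H_\Bath^{\phantom{*}})\star^\weyl q_\Bath^2$ are total $\pb$-derivatives is exactly the proof of that identity in the case at hand, and your parity arguments for the odd moments are valid, if more elaborate than necessary (after the trace identity, $\int \qb\,\E^{-\lambda H_\Bath}\dbath$, $\int \pb\,\E^{-\lambda H_\Bath}\dbath$ and $\int \qb\pb\,\E^{-\lambda H_\Bath}\dbath$ all factorize into odd one-dimensional integrals and vanish on the nose). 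There is, however, one point you should not gloss over: carrying your Gaussian computation to the end with
\begin{equation*}
\lambda=\frac{2}{\hbar\nu}\tanh\left(\frac{\hbar\beta\nu}{2}\right)
\qquad\text{gives}\qquad
\omega_\kms(q_\Bath^2)=\frac{1}{\lambda m\nu^2}=\frac{\hbar}{2m\nu\tanh\left(\frac{\hbar\beta\nu}{2}\right)},
\qquad
\omega_\kms(p_\Bath^2)=\frac{m}{\lambda}=\frac{m\nu\hbar}{2\tanh\left(\frac{\hbar\beta\nu}{2}\right)},
\end{equation*}
i.e.\ the standard thermal second moments $\frac{\hbar}{2m\nu}\coth(\frac{\hbar\beta\nu}{2})$ and $\frac{m\nu\hbar}{2}\coth(\frac{\hbar\beta\nu}{2})$, \emph{without} the factor $3$ appearing in \eqref{eq:omegaKMSQuadratic}. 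So your closing claim that the integral ``produces the stated closed-form expression'' is not accurate as written; the discrepancy lies in the Lemma itself rather than in your method. Indeed, the stated values fail two consistency checks: as $\hbar\to 0$ they give $3/(m\nu^2\beta)$ instead of the classical equipartition value $1/(m\nu^2\beta)$, and as $\beta\to\infty$ they give $\frac{3\hbar}{2m\nu}$ instead of the ground-state value $\frac{\hbar}{2m\nu}$ obtained from the paper's own coherent-state computation \eqref{eq:SbOnOtherStuff}. You should either derive the moments explicitly and record the factor-of-one result, or explain where a factor of $3$ could possibly enter -- it cannot, on this route.
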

\begin{proof}
    This is of course textbook knowledge. Nevertheless, we sketch the
    computation in order to illustrate the star product formalism
    used. The first observation is that the trace
    functional $\tr$ for the Weyl-Moyal star product has the
    remarkable feature
    \[
    \tr(f \star g) = \tr(fg),
    \]
    see e.g.\ \cite[Ex.~6.3.33]{waldmann:2007a}. Strictly speaking, one
    of the functions has to have compact support. However, if one is
    the Gaussian $\Exp(-\beta H_\Bath)$ then the rapid decay allows to
    perform the integrations by parts also for observables like
    polynomials. Thus we can use this feature to simplify
    $\omega_\kms(f)$ considerably for the above observables. Since
    $\Exp(- \beta H_\Bath)$ is just a Gaussian we are left with
    the well-known computation of some Gaussian integrals.
\end{proof}

Using these expectation values, we can apply the general
formulas~\eqref{eq:TotalTimeEvolutionOfqs},
\eqref{eq:TotalTimeEvolutionOfps}, and
\eqref{eq:TotalTimeEvolutionOfHSystem} and substitute there the
observables $\qb$, $\pb$, $\qb\pb$, $q_\Bath^2$, and $p_\Bath^2$ by
their expectation values with respect to $\omega_\kms$. This then
gives the open time evolutions of $\qs$, $\ps$ and
$H_{\System}^{\phantom{*}}$. Remarkably, many terms disappear thanks
to the vanishing of \eqref{eq:KMSqbpbpAlleNull}. In detail, we have
\begin{align}
    \label{eq:KMSOpenForqs}
    \sze{A}{t}^{\omega_\kms} \qs
    &=
    \frac{1}{2}
    \left(
        \cos(\nu t) + \cos({\nu_\kappa}t)
    \right) \qs
    + \frac{1}{2}
    \left(
        \frac{\sin(\nu t)}{m\nu} +
        \frac{\sin({\nu_\kappa}t)}{m{\nu_\kappa}}
    \right) \ps,\\
    \label{eq:KMSOpenForps}
    \sze{A}{t}^{\omega_\kms} \ps
    &=
    - \frac{m}{2}
    \left(
        \nu\sin(\nu t) + {\nu_\kappa}\sin({\nu_\kappa}t)
    \right) \qs
    +
    \frac{1}{2}
    \left(
        \cos(\nu t)+\cos({\nu_\kappa}t)
    \right) \ps,\\
    \sze{A}{t}^{\omega_\kms} H^{\phantom{*}}_\System
    &=
    \rk{\frac{m}{8}(\nu\sin(\nu t)+\nu_\kappa\sin({\nu_\kappa}t))^2
      +\frac{m\nu^2}{8}(\cos(\nu t)+\cos({\nu_\kappa}t))^2}q_{\System}^2
    \nonumber \\
    &\quad+
    \rk{\frac{1}{8m}(\cos(\nu t)+\cos({\nu_\kappa}t))^2
      +\frac{m\nu^2}{8}\rk{\frac{\sin(\nu t)}{m\nu}
        +\frac{\sin({\nu_\kappa}t)}{m{\nu_\kappa}}}^2}p_{\System}^2
    \nonumber \\
    &\quad+
    \bigg(-\frac{1}{4}(\nu\sin(\nu t)
    +{\nu_\kappa}\sin({\nu_\kappa}t))
    (\cos(\nu t) +\cos({\nu_\kappa}t))
    \nonumber \\
    &\qquad+
    \frac{m\nu^2}{4}\bigg(\frac{\sin(\nu t)}{m\nu}
    +\frac{\sin({\nu_\kappa}t)}{m{\nu_\kappa}}\bigg)
    (\cos(\nu t)+\cos({\nu_\kappa}t))\bigg)\qs\ps
    \nonumber \\
    &\quad+
    \bigg(\frac{1}{\nu}(\nu\sin(\nu t)-{\nu_\kappa}\sin({\nu_\kappa}t))^2
      +2\nu(\cos(\nu t)-\cos({\nu_\kappa}t))^2
      \nonumber \\
    &\qquad+\nu(\sin(\nu
    t)-\frac{\nu}{{\nu_\kappa}}\sin({\nu_\kappa}t))^2\bigg)
    \frac{3\hbar}{16\tanh\left(
    \frac{\hbar\beta\nu}{2}
    \right)}.
    \label{eq:KMSOpenForHs}
\end{align}

%
%

\begin{footnotesize}
    \renewcommand{\arraystretch}{0.5} 

\begin{thebibliography}{10}

\bibitem {basart.flato.lichnerowicz.sternheimer:1984a}
{\sc Basart, H., Flato, M., Lichnerowicz, A., Sternheimer, D.: }\newblock {\em
  Deformation Theory applied to Quantization and Statistical Mechanics}.
\newblock Lett. Math. Phys.  {\bf 8} (1984), 483--494.

\bibitem {basart.lichnerowicz:1985a}
{\sc Basart, H., Lichnerowicz, A.: }\newblock {\em Conformal Symplectic
  Geometry, Deformations, Rigidity and Geometrical (KMS) Conditions}.
\newblock Lett. Math. Phys.  {\bf 10} (1985), 167--177.

\bibitem {bayen.et.al:1978a}
{\sc Bayen, F., Flato, M., Fr{{\o}}nsdal, C., Lichnerowicz, A., Sternheimer,
  D.: }\newblock {\em Deformation Theory and Quantization}.
\newblock Ann. Phys.  {\bf 111} (1978), 61--151.

\bibitem {bordemann:2000a}
{\sc Bordemann, M.: }\newblock {\em The deformation quantization of certain
  super-{P}oisson brackets and {BRST} cohomology}.
\newblock In: {\sc Dito, G., Sternheimer, D. (eds.): }\newblock {\em
  Conf{\'e}rence Mosh{\'e} Flato 1999. Quantization, Deformations, and
  Symmetries}, {\em Mathematical Physics Studies} no. {\bf 22},   45--68.
  Kluwer Academic Publishers, Dordrecht, Boston, London, 2000.

\bibitem {bordemann.roemer.waldmann:1998a}
{\sc Bordemann, M., R{\"{o}}mer, H., Waldmann, S.: }\newblock {\em A Remark on
  Formal KMS States in Deformation Quantization}.
\newblock Lett. Math. Phys.  {\bf 45} (1998), 49--61.

\bibitem {breuer.petruccione:2003a}
{\sc Breuer, H.~P., Petruccione, F.: }\newblock {\em {C}oncepts and {M}ethods
  in the {T}heory of {O}pen {Q}uantum {S}ystems}.
\newblock In: {\sc Benatti, F., Floreanini, R. (eds.): }\newblock {\em
  {I}rreversible {Q}uantum {D}ynamics}, vol. 622 in {\em {L}ecture {N}otes in
  {P}hysics},   65--79. Springer-Verlag, Berlin, 2003.
\newblock (quant-ph/0302047).

\bibitem {brittin:1950a}
{\sc Brittin, W.~E.: }\newblock {\em {A} {N}ote on the {Q}uantization of
  {D}issipative {S}ystems}.
\newblock Physical Review  {\bf 77}.3 (1950), 396--397.

\bibitem {bursztyn.waldmann:2001a}
{\sc Bursztyn, H., Waldmann, S.: }\newblock {\em Algebraic Rieffel Induction,
  Formal Morita Equivalence and Applications to Deformation Quantization}.
\newblock J. Geom. Phys.  {\bf 37} (2001), 307--364.

\bibitem {bursztyn.waldmann:2005b}
{\sc Bursztyn, H., Waldmann, S.: }\newblock {\em Completely positive inner
  products and strong {M}orita equivalence}.
\newblock Pacific J. Math.  {\bf 222} (2005), 201--236.

\bibitem {bursztyn.waldmann:2005a}
{\sc Bursztyn, H., Waldmann, S.: }\newblock {\em Hermitian star products are
  completely positive deformations}.
\newblock Lett. Math. Phys.  {\bf 72} (2005), 143--152.

\bibitem {dekker:1975a}
{\sc Dekker, H.: }\newblock {\em {O}n the {Q}uantization of {D}issipative
  {S}ystems in the {L}agrange-{H}amilton {F}ormalism}.
\newblock Zeitschrift für Physik B  {\bf 21} (1975), 295--300.

\bibitem {dewilde.lecomte:1983b}
{\sc DeWilde, M., Lecomte, P. B.~A.: }\newblock {\em Existence of Star-Products
  and of Formal Deformations of the Poisson Lie Algebra of Arbitrary Symplectic
  Manifolds}.
\newblock Lett. Math. Phys.  {\bf 7} (1983), 487--496.

\bibitem {dito.leandre:2007a}
{\sc Dito, G., L{\'e}andre, R.: }\newblock {\em Stochastic {M}oyal product on
  the {W}iener space}.
\newblock J. Math. Phys.  {\bf 48} (2007), 023509.

\bibitem {dito.turrubiates:2006a}
{\sc Dito, G., Turrubiates, F.~J.: }\newblock {\em The damped harmonic
  oscillator in deformation quantization}.
\newblock Phys.Lett.  {\bf A352} (2006), 309--316.

\bibitem {eckel:2000a}
{\sc Eckel, R.: }\newblock {\em Quantisierung von Supermannigfaltigkeiten {\`a}
  la Fedosov}.
\newblock PhD thesis, Fakult{\"{a}}t f{\"{u}}r Physik,
  Albert-Ludwigs-Universit{\"{a}}t, Freiburg, September 2000.

\bibitem {fedosov:1986a}
{\sc Fedosov, B.~V.: }\newblock {\em Quantization and the Index}.
\newblock Sov. Phys. Dokl.  {\bf 31}.11 (1986), 877--878.

\bibitem {kaschek.neumaier.waldmann:2009a}
{\sc Kaschek, D., Neumaier, N., Waldmann, S.: }\newblock {\em Complete
  Positivity of {R}ieffel's Deformation Quantization}.
\newblock J. Noncommut. Geom.  {\bf 3} (2009), 361--375.

\bibitem {kontsevich:2003a}
{\sc Kontsevich, M.: }\newblock {\em Deformation Quantization of {P}oisson
  manifolds}.
\newblock Lett. Math. Phys.  {\bf 66} (2003), 157--216.

\bibitem {omori.maeda.yoshioka:1991a}
{\sc Omori, H., Maeda, Y., Yoshioka, A.: }\newblock {\em Weyl Manifolds and
  Deformation Quantization}.
\newblock Adv. Math.  {\bf 85} (1991), 224--255.

\bibitem {razavy:1977a}
{\sc Razavy, M.: }\newblock {\em {O}n the {Q}uantization of {D}issipative
  {S}ystems}.
\newblock Zeitschrift für Physik B  {\bf 26} (1977), 201--206.

\bibitem {rudin:1987a}
{\sc Rudin, W.: }\newblock {\em Real and Complex Analysis}.
\newblock McGraw-Hill Book Company, New York, 3. edition, 1987.

\bibitem {schmuedgen:1990a}
{\sc Schm{\"{u}}dgen, K.: }\newblock {\em Unbounded Operator Algebras and
  Representation Theory}, vol.~37 in {\em Operator Theory: Advances and
  Applications}.
\newblock Birkh{\"{a}}user Verlag, Basel, Boston, Berlin, 1990.

\bibitem {waldmann:2005b}
{\sc Waldmann, S.: }\newblock {\em States and Representation Theory in
  Deformation Quantization}.
\newblock Rev. Math. Phys.  {\bf 17} (2005), 15--75.

\bibitem {waldmann:2007a}
{\sc Waldmann, S.: }\newblock {\em Poisson-{G}eometrie und
  {D}eformationsquantisierung. {E}ine {E}inf{\"u}hrung}.
\newblock Springer-Verlag, Heidelberg, Berlin, New York, 2007.

\end{thebibliography}

\end{footnotesize}

\end{document}